\documentclass[format=acmsmall, screen=true]{acmart}
\usepackage[utf8]{inputenc}
\usepackage{acm-ec-25}
\usepackage{booktabs} 
\usepackage[ruled]{algorithm2e} 

\SetAlFnt{\small}
\SetAlCapFnt{\small}
\SetAlCapNameFnt{\small}
\SetAlCapHSkip{0pt}
\IncMargin{-\parindent}
\usepackage{amsmath}
\usepackage{amsthm}

\setcitestyle{authoryear}
\usepackage{subcaption}

\newtheorem{theorem}{Theorem}
\newtheorem{assumption}{Assumption}

\newtheorem{proposition}{Proposition}
\newtheorem{lemma}{Lemma}

\newtheorem{definition}{Definition}

\makeatother

\newcommand{\ut }{\underline{\tau} }
\newcommand{\eps}{\varepsilon}

\title{The Algorithmic Advantage:\\ How Reinforcement Learning Generates Rich Communication}

\author{Emilio Calvano}
\affiliation{%
  \institution{Luiss University, CEPR and Toulouse School of Economics}
  \country{Italy}
}

\author{Clemens Possnig}
\affiliation{%
  \institution{University of Waterloo}
  \country{Canada}
}

\author{Juha Tolvanen}
\affiliation{%
  \institution{University of Rome Tor Vergata}
  \country{Italy}
}

\begin{abstract}

\vspace*{1em}
We analyze strategic communication when advice is generated by a reinforcement-learning algorithm rather than by a fully rational sender. Building on the cheap-talk framework of Crawford and Sobel (1982), an advisor adapts its messages based on payoff feedback, while a decision maker best-responds. We provide a theoretical analysis of the long-run communication outcomes induced by such reward-driven adaptation. With aligned preferences, we establish that learning robustly leads to informative communication even from uninformative initial policies. With misaligned preferences, no stable outcome exists; instead, learning generates cycles that sustain highly informative communication and payoffs exceeding those of any static equilibrium.
\begin{center}
\vspace*{1em}
\vspace{1em}
\end{center}

\end{abstract}

\begin{document}



\maketitle
\pagestyle{plain}


\setlength{\parskip}{\baselineskip}

\section{Introduction}

Suppose an advisor observes a stream of data containing information that is payoff relevant for a decision maker. The advisor employs an algorithm that periodically distills this data stream into a message, transmits it to the decision maker, and observes her subsequent choice. Through this feedback, the algorithm learns to adjust its messaging strategy so as to influence the decision maker’s choices in a direction aligned with the advisor's objectives.

\noindent \textit{Does the algorithm learn to communicate meaningfully? How much information is ultimately transmitted? And does common knowledge of the strategic intent hinder the learning process?}

These questions are of substantial economic importance as algorithmic advice increasingly shapes decisions in diverse domains. For example, large digital platforms extract information from big data and use it to inform and/or influence user behavior/engagement all while users may ignore advice due to biases or incentive misalignment. Recent machine and reinforcement learning methods tackle these challenges jointly, directly optimizing communication policies that induce better decisions without requiring explicit interpretation of the informational content of the underlying data.

To make this concrete, consider a rental platform like Airbnb that provides hosts with daily price suggestions. The platform’s problem is to periodically turn a wealth of time-varying demand data into actionable, unit-specific price advice that hosts will actually find in their best interest to follow. The canonical economics approach is to estimate demand and then compute and effectively communicate the optimal price. However, doing this reliably at scale is complex and model-intensive. Reinforcement learning (RL) offers a more direct, model-free route by learning a mapping from data to price suggestions without explicitly solving the underlying pricing problem. 

To understand how reinforcement learning shapes information transmission, we analyze this problem by embedding an RL algorithm within the well-known cheap talk framework of Crawford and Sobel (1982). Specifically, we assume that the sender acts like a so-called tabular Q-learning algorithm, while the receiver behaves like a sequence of short-lived idealized economic agents always best-responding to the sender's current strategy (or, in one of our extensions, learning faster than the sender). This allows us to talk to applications where human decision makers receive advice from an algorithmic recommendation system learning to best respond to the advice over time. We study both an environment where the receiver's preferences are perfectly aligned with the sender, as well as the traditional environment where the sender prefers higher actions than the receiver.  

The key contribution of this paper is to characterize the simple force that drives the evolution of communication under learning which we refer to as \textit{reward-driven adaptation}. This mechanism reinforces policy changes that generate higher payoffs for the sender and attenuates those that do not, thereby acting as a powerful selection force over communication policies. We show that reward-driven adaptation disciplines the learning process leading to sharp theoretical guarantees and robustly informative communication while never reaching the welfare maximizing level.

In the baseline, no-bias environment we provide analytical results showing that any stable long-run outcome of the learning process must transmit a substantial amount of payoff-relevant information about the state: welfare is bounded away from the babbling benchmark by a tight lower bound. Importantly, this guarantee does not hinge on knife-edge mixing, or favorable initial conditions, and it can arise even when the sender is initialized at a completely uninformative policy. Specifically, using a natural measure of payoff-relevant informativeness, we show that any attracting policy attains at least a tight lower-bound welfare level, which in our baseline calibration equals at least 98 percent of the potential surplus that could be generated under full revelation. 

When the sender prefers higher actions than the receiver, we show that the underlying game does not have any stable outcomes. In particular, the partitional equilibria of Crawford and Sobel (1982) are \textit{not} stable. Hence, any learnt behavior must be changing over time. We then show using simulations that the typical pattern learnt in this environment is one where the algorithm divides the state space using a threshold. Above the threshold, the sender pools all of the states using the same message for all of those states. Below the threshold, however, the sender's policy is constantly changing but very close to full revelation. For all of the currently fully revealing states, the state below will have an incentive to use the message associated with the state above it (due to the positive bias). In other words, the algorithm is constantly trying to inflate its private information below the threshold leading to stable cycling. Nevertheless, this constantly changing language is highly informative. Both the sender's and the receiver's welfare in our simulations is always \textit{above} the best theoretical equilibrium from Crawford and Sobel (1982). 

Another interesting insight is that exploration, while necessary for learning, also \textit{hinders} communication and can prevent the algorithm from ever learning to fully reveal, even in the no-bias case where full revelation maximizes payoffs. Because the sender experiments with non-negligible probability, the receiver understands that even in the best-possible scenario where the sender always truthfully reveals the state, any given message could have been generated from other states due to experimentation, so her choice will be more `cautious,' as the posterior is pulled toward the prior. When the learning algorithm infers that the receiver does not fully follow its advice, it responds by adjusting messages to correct for these more cautious actions, which makes exaggeration profitable and leads the dynamics toward pooling at the extremes of the state space. In the no-bias environment, we argue that , for sufficiently high exploration rates, fully revealing policies are not stable under learning, and the simulations confirm that the long-run outcome is a highly but not fully informative pooling structure.

Needless to say, our focus on tabular Q-learning raises external validity concerns. We nevertheless adopt it for three reasons. First, tabular Q-learning is a natural benchmark for our setting: it is widely used as a baseline method for learning optimal policies from payoff feedback.\footnote{Other natural candidates for this problem are contextual bandit algorithms; it turns out that Q-learning in this setting is equivalent to contextual versions of the `Exp3' bandit algorithm \cite{auer2002nonstochastic}.}
Second, its transparency makes it suitable for isolating the mechanisms through which reward-driven updating shapes strategic communication. Finally, it is a foundational building block for more sophisticated reinforcement-learning architectures, so studying it allows us to isolate forces that plausibly carry over to broader classes of learning-based systems. 


\section{Related Literature}
Algorithmic advice from AI or machine learning tools has become increasingly important to human decision-makers in a diverse and broadening array of domains from the judiciary system \citep{ludwig2021fragile} and employee recruitment processes \citep{hoffman2018discretion} to recommending pricing for accommodation services \citep[e.g.,][]{huang2022pricing, garcia2022demand}, airline tickets, and recommendations on platforms such as Amazon, Spotify, and Netflix. These services must not only gather relevant information but also present it to users in a credible and profitable manner so as to boost engagement. 

A number of authors have recently argued that the algorithm's designer's and the end user's incentives are often misaligned \citep[e.g.][]{cowgill2020algorithmic, huang2022pricing, garcia2023strategic}. This makes algorithmic advice a problem of strategic communication rather than mere prediction. Zooming in on this communication aspect, a paper developed concurrently with this one, \citet{condorelli2023cheap}, use simulations to study a cheap-talk environment where both the sender and the receiver are reinforcement learners and document substantial information transmission.  We provide the first theoretical analysis of of reinforcement-learning–based cheap talk. Our analysis considers a version of the problem in which the receiver learns faster than the sender, and delivers theoretical guarantees that isolate the reward-driven mechanisms through which learning disciplines communication and ensures the emergence of an informative language.


Somewhat similarly, computer scientists have considered whether algorithms can learn to communicate with each other \citep[see especially][]{lazaridou2017multiagent, NIPS2017_70222949, noukhovitch2021emergent}. Relative to our work, the papers we know from this area are simulation-based, use machine learning for both the sender and the receiver, and do not consider the exact standard cheap talk framework of \citet{crawford1982strategic}. They do highlight that computer scientists consider solving communication problems using machine learning and reinforcement learning in particular. We contribute to this literature by providing a strict theoretical lower bound on welfare for the standard cheap talk problem with aligned interests even when the sender uses a simple soft-max Q-learning, and we highlight strategic frictions that arise both with aligned and misaligned interests when the receiver is sophisticated but myopic.\footnote{We believe this to be a natural benchmark for most recommendation systems where the receiver is a human. Exploration in the algorithm will often be observable to humans from the realized outcomes and hence they will adapt their expectations accordingly. However, the same algorithm is typically used for a large population of receivers and hence there is little hope for a single receiver to try to ``teach'' the algorithm, justifying our assumption on myopia.}

Our paper is also related to research in evolutionary game theory that considers the emergence of language. \citet{WARNERYD1993532} shows that in a cheap talk game where the sender is picked at random, where only pure strategies are allowed and where the two players have identical preferences, the equilibria that satisfy a weak version of evolutionary stability are fully informative equilibria. In contrast, we argue that when learning happens through experimentation, that experimentation generates additional frictions and the final languages are typically not fully informative. Furthermore, we show that evolution through experimentation in the sense of reinforcement learning will never lead to stable languages when the sender's and the receiver's preferences are misaligned. There is also a body of work that considers learning languages through a lens of replicator dynamics that resemble our theoretical tools \citep[for a review of this literature, see][]{nowak2002computational}. Reminiscent to our results on Q-learners, \citet{PAWLOWITSCH2008203} shows that the evolutionary replicator dynamics can get stuck at suboptimal languages even when the sender's and receiver's preferences are aligned as long as mixed strategies are allowed. Relative to those results, we provide a lower bound on welfare when evolution (and hence sender's mixing) happens through Q-learning, and show that no equilibria can be learnt when the sender's and receiver's preferences are misaligned.

Last, our paper contributes to the rapidly increasing literature on how algorithmic systems interact with themselves and/or with humans through markets or other social institutions. A large fraction of this literature studies why and how reinforcement algorithms are able to obtain payoffs that are above any stage-game payoff \citep[see, for example,][]{calvano2020artificial,asker2022artificial, banchio2022artificial, Johnson2023,possnig2023reinforcement, brown2023}. For example, in pricing games supra equilibrium payoffs are tantamount to collusion between firms. Our results on the case of preference disagreement add to this literature by showing that payoffs that are higher than the best equilibrium payoff can be obtained also in the context of cheap talk. Our result differs from the work on collusion in markets in the sense that our algorithm does not condition on a history of actions and the observed outcomes are not equilibria of a dynamic game but stable best-response cycles.

At the same time, our analysis highlights an important limitation of reinforcement-learning–based systems. When sender and receiver interests are aligned, learning does not generally lead to full welfare maximization: exploration and reward-driven adaptation can prevent convergence to the fully revealing equilibrium, even when it is efficient. In this sense, the same features that make reinforcement-learning algorithms attractive in practice, scalability, model-free implementation, and robustness, also come with strings attached. This negative result connects our findings to recent work showing that algorithmic systems designed to assist users in their decisions can systematically distort behavior and lead to suboptimal outcomes, including in recommender systems \citep{calvano2025artificial}.

\section{The Model}
We consider the following discrete version of \citet{crawford1982strategic}'s quadratic payoff setting for our interaction between an algorithm and a rational agent.\footnote{We consider a discretization since our algorithmic sender requires a game with finite action and state spaces.}  There is a sender $S$, who observes a state variable $X$, uniformly distributed on $\mathbf{S}_K = \{x^{(1)},\dots,x^{(k)},\ldots, x^{(K)}\} \subset [0,1]$, for some $K\in \mathbb{N}$. We fix $x^{(1)} = 0, x^{(K)} =1$, so that $x^{(k)} - x^{(k-1)} = \frac{1}{K-1}$ for all $1<k\le K$.

Given a realized state $x$, the sender's payoff is $U^S(y,x,b) = -(y - x - b)^2$, where $y$ represents the receiver $R$'s action, and $b>0$ represents the preference disagreement between the receiver and the sender that we will often call the sender's bias.\footnote{The model is identical to one with a negative $b$ or one with an additive bias term for both the sender and the receiver in which case only their difference matters.} The receiver's payoff given state realization $x$ and action $y$ is $U^R(y,x)= -(y - x)^2$. After observing the state $x$, $S$ sends a message $m \in \mathbf{M}_K=\mathbf{S}_K$ to the receiver, who considers the message, updates their belief over the state $X$, and then plays an action $y\in \mathbb{R}$. The sender has a messaging strategy $\mu(x,m)$ which specifies the likelihood of sending message $m \in M$ given state $x$. The receiver, in turn, has an action strategy $y(m,\mu)$ specifying an action for every possible message sent by $S$, given $S$'s policy $\mu$. We refer to the space of mixed sender strategies $\mu$ as $\Theta$, and to the game as $\Gamma$.

\subsection{Algorithmic Cheap Talk}\label{subsec: Algo Intro}
This is where our study departs from the classical cheap talk setting among rational agents. We consider an interaction in which the sender is not fully rational, but a reinforcement learning algorithm. This algorithm updates its behavior over time, as payoffs from an interaction with a rational receiver are accrued. 

We consider a tabular Q-learner as the sender $S$. Q-learning is a much studied reinforcement learning method, well regarded due to its simplicity, analytical tractability and minimal information requirements. For a detailed introduction, see \citet{sutton2018reinforcement}. Q-learning was introduced to find optimal policies for single-agent Markov decision problems (MDPs), when little to no information about payoff functions and state transitions is known to the algorithm's designer. This learning rule has become the focus of attention also of researchers interested in multi-agent learning, again due to its minimal modeling requirements, ease of setup, and also since many more involved reinforcement learning schemes retain, at the very least, some of the conceptual intuitions introduced by Q-learning. 

 A Q-learner estimates what is known as a Q-value function. This function $Q: \mathbf{S}_K \times \mathbf{M}_K \to \mathbb{R}$ is meant to allow the learner to find an optimal policy mapping from $\mathbf{S}_K$ to $\mathbf{M}_K$. Essentially, this function intends to estimate the expected payoff for any given state and action, allowing to find optimal actions upon realization of a state. The estimator is formed as a simple weighted-averaging rule, and only updated upon realizing a given state-action pair.

We refer to realizations at a period $t$ using subscripts. Every period $t$, a state $x_t$ is realized and the learner selects a message $m_t$ at random (more below). The receiver then takes their posterior as action $y_t = y(m_t,\mu_t)$, and payoffs are realized. Importantly, we assume $u_t = U^S(y_t,x_t,b)+\eta$, where $\eta $ is a zero-mean noise term with bounded variance $0<\sigma_{\eta}^2<\infty$. This noise term both serves as helpful tool in our theoretical approach, and to robustify our results to the realistic scenario in which a learner's payoffs are only noisily observed.  Upon a realization of state $x_t$, message $m_t$, and payoff $u_t$, the Q-function is updated the following way:
\begin{align}\label{eq: Qlearn}
	    Q_{t+1}(x,m) 
        &= 
	    (1-\alpha_t)Q_t(x,m) + \alpha_t \Big[u_t + \beta \max_{m' \in \mathbf{M}_K}Q_t(x_{t+1},m')  \Big],
	\end{align}
    for $(x,m)=(x_t, m_t)$, and $Q_{t+1}(x,m)=Q_t(x,m)$ otherwise. 
	Here, $\alpha_t> 0$ is a non-increasing sequence,  known as the `learning rate', and $\beta\in [0,1]$ is a discount factor. Our framework carries no time-dependence among state realizations, as states are sampled every period i.i.d. uniformly from $\mathbf{S}_K$. In our baseline setting, we therefore set $\beta=0$.\footnote{We confirm in our simulation study that $\beta>0$ does not qualitatively alter our results.}
 
 Notice that $Q$-learning does not specify a policy. Common convergence results on $Q_t$ give requirements on how often actions are selected over time, but the updating rule is agnostic about how actions $a_t$ are sampled in every period. Mainly, we consider one of the most common sampling rules: the softmax, also known as `Boltzmann'-Q- learning. Under this choice rule, actions are selected proportional to their payoff estimates:
 \begin{align*}
    \mathbb{P}(m_t = m|x,Q) &= \bar \mu(x,m,Q) = (1-\varepsilon_t)\mu(x,m,Q) + \varepsilon_t \frac{1}{K},\\
    \mu(x,m,Q) &= \frac{\exp(\tau_t^{-1}Q(x,m))}{\sum_{m'}\exp(\tau_t^{-1}Q(x,m'))}.
\end{align*}
 $0<\tau_t\le 1$ is a non-increasing sequence that controls the weight put on the maximizing action, while $\varepsilon_t\in (0,1)$ is a non-increasing sequence of weights on a uniform distribution over messages. One can think of $\varepsilon_T$ as ensuring a minimal positive mass on all messages, no matter the value of $\tau$. In what follows, we will be commonly assuming that $\lim_{t\rightarrow \infty}\tau_t = \underline{\tau}>0$, $\lim_{t\rightarrow \infty}\varepsilon_t = \varepsilon>0$.\footnote{We also ran simulation experiments with the so called $\varepsilon$-greedy learner where $\tau_t=0$ for all $t$. The results remain qualitatively very similar and are available from the authors on request.} Throughout, for our theoretical results, we impose the following assumption on hyperparameters of our algorithm:
 \begin{assumption}[Q]
     \begin{enumerate}
     \item[] 
         \item Step sizes satisfy
         \begin{align*}
             \sum_{t\ge 0}\alpha_t = \infty && \sum_{t\ge 0}\alpha_t^2 < \infty.
         \end{align*}
         \item As $\underline{\tau}>0, \eps>0$ vanish, $\frac{\ut}{\eps^2}$ vanishes also.
     \end{enumerate}
 \end{assumption}
 Point (1) is a standard assumption in the theory of reinforcement learning algorithms, ensuring on the one hand that stochasticity is averaged out (fast enough convergence of $\alpha_t$), while allowing for the algorithm to traverse the whole space. Point (2)  ensures that our stability analysis remains valid; our simulations verify that the results extend to more general relationships between $\ut, \eps$.

As it will become useful later on, we distinguish between $\bar \mu()$, S's full messaging policy, and $\mu()$, S's messaging policy excluding exploration parameter $\varepsilon_t$.  To save notation, we sometimes write $\mu_t(x,m)=\mu(x,m,Q_t)$. The receiver $R$ is what we consider a `sophisticated, short-lived' agent: at every period $t$, $R$ knows the sender's policy $\mu_t$ but does not try to manipulate the learning process of the algorithm. This setting can be seen as an extreme version of a receiver who knows that their recommendations are being generated algorithmically and learns the algorithm's strategy faster than the time between the updates to the algorithm's policy.\footnote{We show in Appendix \ref{app: leanringR} that our results are robust to a setting where the receiver learns, according to a family of rules that `update faster' than the Q-learner.} We believe it is a natural starting point as it allows us to highlight how strategic communication is affected when only the sender is algorithmically learning and a rational receiver anticipates this. However, like in the case of AirBnB, we assume that the receiver is part of a much larger population of receivers or otherwise myopic and hence believes their actions have no perceivable impact on the algorithm's learning.\footnote{We leave the case where the receiver tries to teach the algorithm for future research. We conjecture that a patient, rational receiver can eventually implement their preferred policy at least when $\beta>0$, as then the receiver can easily penalize the algorithm for using undesired messaging strategies.} Upon receiving a message $m_t$, the receiver forms a belief about the true state of the world. We write $y_t(m) = y(m, \mu_t)$ as the receiver's strategy if there is no potential for confusion. The quadratic payoff function implies that, given $\mu_t$ and the sender's message $m$, the receiver's best response is to set $y_t(m)$ to be the expectation of $x$ conditional on $(\mu_t, m)$: $y_t(m) = \mathbb{E}\left[X\mid m,\bar \mu_t \right]$. As the receiver's best response tracks $Q$-values (or a policy $\bar \mu()$ given $Q$-values) by assumption, one can write the sender's payoffs as $u(x,m,Q_t)=-\left(x+b-y_t(m)\right)^2$.  We sometimes write $\mu(Q)$ as the softmax policy pinned down by $Q$-values $Q$. 

The updating rule \eqref{eq: Qlearn} together with the fact that the receiver does not know whether at any point the algorithm is exploring or not, imply that this learner can only implement cheap talk. The sender estimates deviation payoffs through exploration - which can happen any period, with positive probability. The receiver cannot react to such a `deviation', as it is unobserved by them. For Bayesian persuasion to be implemented, the sender would have to receive feedback from the receiver through a best response to the sender's deviation, which is impossible in this setting. Furthermore, unlike in Bayesian persuasion, the sender here does not take into account how a change in its policy in one state changes its payoffs in other states.

\section{Preliminary results}

In each period, after observing the current state, the sender draws a message according to a softmax policy over current $Q$-values and then updates the $Q$-value of the realized state-message pair proportionally to the realized payoff. As the step size and the temperature decay, the discrete-time $Q$-learning dynamics can be approximated by an ordinary differential equation in the space of $Q$-values, and the associated limiting policies are stationary points of the induced policy dynamics (Proposition \ref{prop: stochapproxQ} in Appendix \ref{app: nashconv}). We are interested in such limiting policies.


 \begin{definition}
     \begin{enumerate}
     \item[]
     \item Let $\mathbf{NE}_{CS}$ be the set of sender's strategies that are part of a perfect Bayesian equilibrium of $\Gamma$.
     \item For any $\zeta>0$, and sender policy $\mu^*$ define $\mathbf{B}_{\zeta}(\mu^*) = \{\mu \in \Theta: \|\mu - \mu^*\|_{\infty}<\zeta\}$ as the $\zeta$-ball around $\mu^*$.
        \item $\mathbf{L}(\mu_0)$ be the set of limit points of convergent subsequences of $\mu_t=\mu(Q_t)$, given initial value $\mu_0$:
\begin{align*}
    \mathbf{L}(\mu_0) = \bigcap_{t\ge 0}cl\left( \{\mu_{\ell} \mid \ell \ge t\}\right),
\end{align*}
where for any set $A$, $cl(A)$ is the closure.
    \end{enumerate}
\end{definition}
As the learning algorithm we study will always select suboptimal messages as long as limiting temperature $\ut$ and exploration probability $\eps$ remain positive, we must allow for the possibility of learning a policy that is close but not exactly an equilibrium in $\mathbf{NE}_{CS}$, which we measure using $\mathbf{B}_{\zeta}$. The first result will show that the approximation can be made arbitrarily fine. 

$\mathbf{L}(\mu_0)$ is the set of convergent subsequences of a single run of the learning algorithm. This definition allows for non-convergence, which will become important in the bias case. As $\mu_t$, is stochastic, also, $\mathbf{L}(\mu_0)$, the set of limit points of convergent subsequences of $\mu_t$ will be randomly determined.\footnote{The set is always non-empty due to the Bolzano-Weierstrass theorem.} Hence, all of our results will be probabilistic.

\begin{theorem}\label{thm: Nashconv}
 For all $\zeta>0$ there exists $\ut ,\varepsilon>0$ small enough such that for all $\mu_0$,         $\mathbf{L}(\mu_0) = \{\mu^*\}$  implies there exists  $\mu' \in \mathbf{NE}_{CS}$ such that $\mu^* \in \mathbf{B}_{\zeta}(\mu ')$   almost surely.
\end{theorem}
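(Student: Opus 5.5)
The plan is to use stochastic approximation to identify any almost-sure limit $\mu^*$ of $\mu_t$ with a stationary point of the mean-field ODE. We then show that such stationary points converge, as $\ut,\eps\to 0$, to sender strategies in $\mathbf{NE}_{CS}$.

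\textbf{Step 1: the mean-field ODE.} Invoke Proposition \ref{prop: stochapproxQ}. Under Assumption (Q)(1) and the bounded-variance noise $\eta$, the iterates $Q_t$ asymptotically track the ODE
\begin{align*}
\dot Q(x,m) = \tfrac{1}{K}\,\bar\mu(x,m,Q)\bigl(u(x,m,Q) - Q(x,m)\bigr),
\end{align*}
because $\beta=0$ and the pair $(x,m)$ is updated with probability $\frac1K\bar\mu(x,m,Q)\ge \eps/K^2>0$. If $\mathbf{L}(\mu_0)=\{\mu^*\}$, then $Q_t$ must converge along the run. The reason is that $Q_t$ is bounded (payoffs are bounded and updates are averages), and the softmax map $Q\mapsto\mu$ is injective up to state-wise additive constants. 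Those constants are pinned down by the ODE, since $Q(x,m)$ is driven toward $u(x,m,Q)$. By the Benaïm limit-set theorem, the limit $Q^*$ is an equilibrium of the ODE. Because $\bar\mu>0$ everywhere, this forces
\begin{align*}
Q^*(x,m)=u(x,m,Q^*)=-\bigl(x+b-y(m,\bar\mu^*)\bigr)^2 \quad\text{for all } (x,m).
\end{align*}
So $\mu^*$ is a logit (quantal) response, with temperature $\ut$, to the receiver's best response against $\bar\mu^*=(1-\eps)\mu^*+\eps/K$.

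\textbf{Step 2: uniform approximation.} Argue by contradiction. Suppose that for some $\zeta>0$ there are sequences $\ut_n,\eps_n\to0$ and fixed points $\mu^*_n$ as in Step 1 with $\mathrm{dist}_\infty(\mu^*_n,\mathbf{NE}_{CS})\ge\zeta$. By compactness of $\Theta$, pass to a subsequence with $\mu^*_n\to\hat\mu$. We then need to show $\hat\mu\in\mathbf{NE}_{CS}$, which gives the contradiction. The work is in two pieces.

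\emph{(a) Continuity of the receiver's response.} Because of the $\eps_n$ mixing, every message has positive probability, so $y(m,\bar\mu_n^*)$ is well defined. Along a further subsequence, $y(m,\bar\mu_n^*)\to \hat y(m)\in[0,1]$. For every message $m$ used with positive probability under $\hat\mu$, Bayes' rule is continuous there, so $\hat y(m)=\mathbb{E}[X\mid m,\hat\mu]$. For off-path messages, $\hat y(m)$ is still the limit of posterior means of some beliefs supported on $\mathbf{S}_K$, so it is a best response to a legitimate off-path belief, which is all a PBE requires.

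\emph{(b) Optimality of the sender.} If $\hat\mu(x,m)>0$ but $m$ is not optimal at $x$ against $\hat y$, then for large $n$ there is a gap $Q^*_n(x,m')-Q^*_n(x,m)\ge\delta>0$. The logit weight on $m$ is then at most $e^{-\delta/\ut_n}\to0$, contradicting $\hat\mu(x,m)>0$. Hence $(\hat\mu,\hat y)$ is a PBE and $\hat\mu\in\mathbf{NE}_{CS}$. Taking $\mu'=\hat\mu$ yields $\mu^*_n\in\mathbf{B}_\zeta(\mu')$ for large $n$, a contradiction. The resulting bound on $(\ut,\eps)$ does not depend on $\mu_0$, because the argument runs over all fixed points at once.

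\textbf{Main obstacle.} The delicate point is step (b) combined with the joint limit. The gap $\delta$ must survive in the limit while $\ut_n/\eps_n^2\to0$ is used to keep payoffs of rarely used messages meaningful. When $\mu^*_n(x,m)$ is tiny, $Q^*_n$ still equals the exact expected payoff. However, the posterior $y(m,\bar\mu_n^*)$ for near-off-path messages is driven by the $\eps_n$ mass, and it must not oscillate in a way that breaks the strict inequality. I would control this using Assumption (Q)(2). Posterior means move by $O(\ut_n/\eps_n)$-type amounts under logit perturbations of size $e^{-\delta/\ut_n}$, relative to $\eps_n$ mass. This keeps the limiting payoff comparison strict, so the contradiction in (b) goes through.
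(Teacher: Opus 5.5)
Your proposal is correct and follows essentially the same route as the paper. You identify the limit as a rest point of the mean ODE, so that $\mu^*$ is a logit (entropy-perturbed) equilibrium, and then prove upper hemicontinuity of these fixed points at $\ut,\eps=0$ by a contradiction argument combining compactness with a strict payoff gap; you are in fact more careful than the paper about off-path receiver beliefs. The worry in your final paragraph is unnecessary. Once you extract a subsequence with $y(m,\bar\mu_n^*)\to\hat y(m)$ for every $m$, the payoffs $Q_n^*(x,\cdot)$ converge, so any strict gap in the limit persists for large $n$ and Assumption (Q)(2) plays no role here.
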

In other words, convergence of the algorithm implies that a policy arbitrarly close to a perfect Bayesian equilibrium of the discretized cheap talk game is being played. 

Notice that our learning setup cannot sustain equilibria that rely on finely tuned mixing over payoff-equivalent messages. In many Crawford–Sobel equilibria, the sender is indifferent across multiple messages either within a state or across states, and equilibrium play is supported only by selecting specific probabilities over those messages. A softmax Q-learner cannot reproduce such behavior. By construction, any two actions that yield the same continuation payoff are assigned equal probability in the limit, so the learner cannot maintain the asymmetric mixing required to support these knife-edge equilibria. Consequently, limiting policies can only correspond to equilibria that do not rely on asymmetric mixing across payoff-equivalent messages.

\section{No Bias}
Consider the case when $b=0$ and hence the incentives of the sender and the receiver are fully aligned. In this benchmark, the static Crawford-Sobel game admits a fully revealing equilibrium, a babbling equilibrium, as well as a large number of other equilibria that are somewhere in between these two extreme cases.\footnote{Babbling refers to an outcome in which the sender’s messages carry no information about the state, so the receiver ignores them and always chooses the same action.} Our main result is that algorithmic communication is guaranteed to be highly informative but it almost never achieves \textit{fully} informative communication.  

The key conceptual feature of this dynamic is that it is reward-driven: whenever a small change in the policy increases the sender's expected payoff, the induced change in $Q$-values is such that the policy drifts in that direction. This reward-driven property naturally leads the learner to discard highly uninformative outcomes. As an extreme example, consider babbling, the least informative equilibrium. Suppose the algorithm is initialized at a babbling policy and hence the receiver's action is also independent of the state. Furthermore, to stack the deck against information transmission, assume that the babbling policy is induced by $Q$-values all equal to the payoff that the receiver gets from the mean action $\tfrac{1}{2}$ in each corresponding state. In other words, assume that the algorithm starts at correct payoff estimates and strategy corresponding to a babbling equilibrium of the underlying Crawford-Sobel game where all actions in all states are played with equal probability. Now consider what happens if one of those $Q$-values is slightly perturbed to be higher in state $x$. Whenever the receiver sees the message corresponding to the perturbed Q-value, they will now associate an action to it that is slightly closer to $x$, yielding a higher payoff. This creates a feedback loop: the sender reinforces the perturbed message, which further increases its $Q$-value and raises the probability of playing it in that state. Simultaneously, also in states far away from $x$ the algorithm learns that the action the receiver takes after seeing the perturbed message is worse than what is available by using the other babbling messages reinforcing other messages. Hence, small local deviations away from babbling are strictly payoff-improving, and reward-driven adaptation pushes the policy away from the babbling equilibrium. In other words, the algorithm naturally unravels babbling.

The same logic can be used more generally. Any policy that can be locally perturbed in a way that raises the sender's expected payoff cannot be a stable limit point of the $Q$-learning dynamics. 

One might be tempted to conclude that the only thing that the algorithm can learn is then full revelation as it maximizes the sender's payoff. We will show that this intuition is not true as there exist equilibria of the game where local perturbations, even if observed by the receiver, will not improve the sender's (or the receiver's) payoff, leading to the algorithm unlearning them. Yet, this reward-driven adaptation is going to rule out a large number of low information equilibria. In what follows, we show that it implies three necessary structural restrictions on candidate limiting policies which we then use to construct our information lower bound. 

The first property is that limiting policies must be such that a single message cannot be used on ``disconnected'' subsets of the state space.

\begin{definition}[Connected policy]
A policy $\mu$ is connected if for any message $m$ and any $x<x'$ such that $\mu(x,m)>0$ and $\mu(x',m)>0$, we have $\mu(x'',m)>0$ for all $x''\in[x,x']$.
\end{definition}

 Connectedness may seem natural in this context as Crawford and Sobel (1982) already showed that all equilibria of the cheap talk game with a \textit{positive bias} are connected. Notice, however, that this is not true in their setting when the bias is zero. As an example, consider the sender sending a single message for the most extreme right and left states and pooling the rest of the states under another message. This sender's strategy would yield a disconnected perfect Bayesian equilibrium of the cheap talk game with zero bias.

 Intuition for why limiting policies for the algorithm must be connected can be built as follows. Suppose that there is a limiting policy such that for some message $m$ the algorithm uses $m$ at states $x$ and $x'$ but assigns zero probability to using $m$ at any state strictly between $x$ and $x'$. If playing $m$ at $x$ and $x'$ is optimal against the receiver's response to $m$, then either (i.) playing $m$ at some intermediate $x''\in(x,x')$ leads to a weakly higher payoff or (ii.) at either $x$ or $x'$ there is another message $m'$ that leads to a strictly higher payoff than $m$. To see this, pick first a state $x''\in (x,x')$ and consider what happens if the posterior of the receiver satisfies $y(m)< x''$. Then playing $m$ leads to a weakly larger payoff for the sender in $x''$ than in $x'$. Conversely if $y(m) > x''$ then playing $m$ leads to a weakly higher payoff in $x''$ than in $x$. So playing $m$ in $x''$ is not optimal for the sender only if there is a message $m''$ such that the expected payoff from that message is larger implying in turn that the receiver's posterior after that message must be closer to $x''$ than after $m$. That implies that message $m''$ yields a strictly larger payoff either in $x$ or in $x'$ contradicting the working hypothesis that $m$ is optimal in those two states. If instead $y(m)=x''$, the algorithm will learn to play $m$ at that state with a probability that is at least as high as what it uses for other messages in that state contradicting the assumption that $m$ is used only in states $x$ and $x'$. 
 
 In summary, if a non-connected policy is not an equilibrium, the algorithm will eventually learn the deviation that benefits it and will reinforce that deviation. That is, any ``hole'' between two states that both send the same message is eliminated by reward-driven adaptation. If the policy is an equilibrium, it requires the sender to be indifferent across messages used in states in the disconnected part of the policy and its complement. The algorithm will eventually learn to play those messages at equal probabilities in all of the states where the equilibrium requires indifference, again filling the ``hole''.


The second property that limiting policies must satisfy is that the middle state is not pooled with any other state. 
\begin{definition}[Middle state fully revealing, MSFR]
A policy $\mu$ has property MSFR, if for message $m$ and $x\neq 0.5$,  $\mu(0.5,m)>0$ implies $\mu(x,m)=0$.
\end{definition}
That is, reward-driven adaptation makes sure that when there is no news this is revealed to the receiver.

The simplest intuition for why this is necessary can be built by starting from a candidate policy that sends some message with arbitrarily small probability, i.e. features an ``unused'' message, while prescribing an action different from $1/2$ at the mean state $1/2$. Under this policy, playing an unused message conveys no news and therefore leads the receiver to choose the action that is optimal given only the prior. This action yields the correct payoff at the mean state $1/2$, and so any small amount of experimentation on this unused message raises its $Q$-value relative to the messages that implement an action which is different than $1/2$. As a consequence, the learning dynamic increases the probability of sending the unused message at state $1/2$, and the policy drifts toward revealing that the state equals the prior. In this way, reward-driven adaptation naturally eliminates any candidate policy that prescribes a suboptimal action in the middle state. 

A very similar, but more complex logic applies when there is no such unused message. In that case every message is sent with positive probability, but it is still possible to show that small differences in relative message probabilities create the same type of payoff-improving direction. Messages that are played slightly less often induce posteriors that are closer to the prior, and therefore yield higher payoffs at the middle state. These small differences are then amplified by the learning dynamic, which reinforces the more profitable message and shifts probability mass toward it.

MSFR is true in general, but it is especially natural in applications where the sender has the option to `remain silent' / not provide any recommendation (which is of course itself a message) and the receiver is naive in the sense that silence is interpreted as an absence of new information. In such cases, the receiver chooses the action that is optimal given only the prior, and the algorithm quickly learns to `remain silent' whenever the state coincides with that prior. In other words, with naive receivers the MSFR condition arises almost automatically.

The final property requires that when the sender suppresses information by sending the same message across multiple states, this suppression be applied consistently and in a not too dissimilar fashion across adjacent states.

\begin{definition}[Similar adjacent pool sizes, SAPS]
A policy $\mu$ has property SAPS, if for any $x<1$, 
\begin{eqnarray}
    &&\left|\left\{x':\exists m \textrm{ s.t. } \mu(x+1/(K-1),m)>0 \textrm{ and } \mu(x',m)>0\right\}\right|-1\nonumber\\
    &\leq& \left|\left\{x':\exists m \textrm{ s.t. } \mu(x,m)>0 \textrm{ and } \mu(x',m)>0\right\}\right|\nonumber\\
    &\leq& \left|\left\{x':\exists m \textrm{ s.t. } \mu(x+1/(K-1),m)>0 \textrm{ and } \mu(x',m)>0\right\}\right|+1\nonumber
    \end{eqnarray}
\end{definition}

SAPS is an incentive-compatibility requirement that neighboring pools must be roughly similar in size, otherwise types at the boundary of a large pool would prefer to mimic those in the adjacent much smaller pool. Specifically, the total number of states in two adjacent pools can differ by at most one. This property reflects the strategic environment of the discrete Crawford-Sobel communication game: large disparities in pool sizes create incentive issues that make such pooling patterns unstable. The learning dynamic inherits this feature and imposes it in an even stronger form.

 As a concrete example consider a system like AirBnB giving price recommendations to its hosts. Suppose the algorithm currently plays the following connected strategy where message 1 signals an optimal price of 10, while message 2 indicates that the optimal price is above 10. If the optimal price can take many values greater than 10, the sender will prefer, for optimal prices just above 10, to deviate from message 2 to message 1, as this leads the receiver to set a price that is closer to the true optimal price.

The following theorem takes stock. 
\begin{theorem}\label{thm: attrchar}
        For all $\zeta>0$ there is $\ut, \eps$ are small enough, so that for all initial $\mu_0$, and  $\mu^*\in \mathbf{NE}_{CS}$, we have that:\\ $\mathbb{P}\left(\mathbf{L}(\mu_0) \subseteq \mathbf{B}_{\zeta}(\mu^*) \right)>0$ if and only if $\mu^*$ is Connected and satisfies MSFR and SAPS. 
    \end{theorem}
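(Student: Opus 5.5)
The proof will use the stochastic-approximation result (Proposition \ref{prop: stochapproxQ} in Appendix \ref{app: nashconv}) to move the question from the discrete iterates $Q_t$ to the limiting mean-field ODE $\dot Q(x,m)=\bar\mu(x,m,Q)\big(u(x,m,Q)-Q(x,m)\big)$. I will then apply standard results on asymptotic pseudotrajectories. These results have two consequences. First, the limit set $\mathbf{L}(\mu_0)$ is almost surely internally chain transitive for the flow. Second, because $\eta$ has non-degenerate variance, any linearly unstable rest point is avoided with probability one (a Pemantle/Brandière–Duflo nonconvergence theorem). Conversely, any asymptotically stable rest point, or attractor, is reached with positive probability from any initial condition, since the noise lets the process enter its basin of attraction. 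By Theorem \ref{thm: Nashconv}, the relevant rest points lie in $\zeta$-neighbourhoods of $\mathbf{NE}_{CS}$. So the theorem reduces to a local stability classification: for $\ut,\eps$ small, the rest point of the ODE near $\mu^*\in\mathbf{NE}_{CS}$ is attracting if and only if $\mu^*$ is Connected, MSFR and SAPS. If it is not attracting, the perturbed equilibrium has an unstable direction, so $\mathbb{P}(\mathbf{L}(\mu_0)\subseteq\mathbf{B}_\zeta(\mu^*))=0$.

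Next I will linearize around the rest point. Writing $\mu^\ast_{\ut,\eps}$ for the perturbed (logit-QRE-like) fixed point, the Jacobian has a block from own-value decay, $-\bar\mu(x,m)$, and a cross block $\bar\mu(x,m)\,\partial u(x,m)/\partial Q(x',m')$. The cross block arises because the receiver's posterior $y(m)$ responds to the policy. The derivative is
\[
\frac{\partial u(x,m)}{\partial \mu(x',m)}\propto 2\,(x-y(m))\,(x'-y(m))\Big/\textstyle\sum_{x''}\bar\mu(x'',m).
\]
When messages are used with tiny probability $\eps/K$, this factor is of order $1/\eps$, while the softmax sensitivity is of order $1/\ut$. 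Assumption Q(2), that $\ut/\eps^2\to0$, is what lets me say which of these terms dominates. I will then treat the three properties as instability directions. For \emph{Connectedness}: if there is a hole, the paper's argument gives a state $x''$ where $m$ is weakly optimal. Softmax then puts non-negligible mass there, so near the rest point either the rest point is not near $\mu^*$ or a perturbation raising $\mu(x'',m)$ pulls $y(m)$ toward $x''$ and increases $u(x'',m)$, which gives a positive eigenvalue. For \emph{MSFR}: I will perturb toward the least-used (or unused) message at $x=1/2$. Its posterior is near the prior, so it yields a payoff at $1/2$ at least as high as the pooled message, and the self-reinforcing cross term exceeds the decay, again giving a positive eigenvalue. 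For \emph{SAPS}: if adjacent pools differ by two or more, the boundary type of the larger pool strictly prefers the neighbouring pool's action. Then $\mu^*$ is not even a strict best response at that state. This contradicts Theorem \ref{thm: Nashconv}'s closeness for small $\ut$, or else gives a direct unstable direction.

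For sufficiency, take $\mu^*$ Connected, MSFR and SAPS. I will show that all eigenvalues of the Jacobian have negative real parts. By connectedness and SAPS, every used message has strict best-response margins bounded below by a constant depending only on $K$. With $\ut$ small, the softmax is then locally insensitive, with derivatives $O(e^{-c/\ut}/\ut)$, on the off-path entries. The cross block coming from unused messages ($\bar\mu\approx\eps/K$) will be controlled through the factor $\bar\mu(x,m)=O(\eps)$ multiplying terms of size $O(1/\eps)$. I expect this to be the hard part: it requires showing that pools of size at most one apart, together with MSFR, produce no positive feedback along within-pool reshuffling directions, where messages are payoff-equivalent. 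My first attempt will be to use the symmetric mixing that softmax imposes on equivalent messages to block-diagonalize, and then check the sign on each block with a Gershgorin-type bound using $\ut/\eps^2\to0$. After that, the positive-probability statement follows from the attractor result described in the first paragraph.
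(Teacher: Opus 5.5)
Your reduction via Proposition~\ref{prop: stochapproxQ} and your Connectedness and MSFR steps are broadly in line with the paper: softmax forces mass onto weakly optimal messages, and Lemma~\ref{lem: unusedM} handles unused messages. One correction there: at $x=\tfrac12$ the mechanism is payoff dominance, not a self-reinforcing cross term, because that term carries the factor $\tfrac12-y(m)\approx0$.

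The genuine gap is SAPS. Take adjacent connected, pure pools $P$ (size $M$) and $P'$ (size $M'$), with $P$ to the left, in the $b=0$ game. The boundary type $\bar x=\max P$ is at distance $\frac{M-1}{2(K-1)}$ from $P$'s posterior and $\frac{M'+1}{2(K-1)}$ from $P'$'s. So $M'\le M-3$ is not an equilibrium at all, which makes it vacuous here since the theorem only concerns $\mu^*\in\mathbf{NE}_{CS}$. Every SAPS violation inside $\mathbf{NE}_{CS}$ therefore has $M'=M-2$, with $\bar x$ \emph{exactly indifferent}. This has three consequences for your argument:
\begin{itemize}
\item Your claim that the boundary type strictly prefers the neighbouring action is false in the only relevant case.
\item There is no contradiction with Theorem~\ref{thm: Nashconv}, because $\mu^*$ itself lies in $\mathbf{NE}_{CS}$.
\item ``Or else gives a direct unstable direction'' is not an argument.
\end{itemize}

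The missing idea is that of Lemma~\ref{lem: SAPS}. Treat $\mu=\mu(\bar x,m_1)$ as pinned down by the softmax fixed-point equation $\mu=[1+e^{\Delta(\mu)/\ut}+\cdots]^{-1}$, with $\Delta=v_1-v_2$, and use the posterior feedback of the mixing itself. Moving mass off $m_1$ pushes $y(m_1)$ away from $\bar x$ and $y(m_2)$ toward it. Hence $\Delta(\mu)>0$ for $\mu<1$ and $\Delta$ is decreasing, so the fixed point is driven to $0$ as $\ut\to0$ and no rest point remains in $\mathbf{B}_\zeta(\mu^*)$.

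If you carry this out, track the exploration term. It pulls both posteriors toward $\tfrac12$. In the configuration the paper treats (larger pool on the outer side), it therefore moves $y(m_1)$ toward $\bar x$ and $y(m_2)$ away, making $\Delta(1)<0$ of order $\eps$. Since Assumption Q(2) gives $\eps\gg\ut$, the fixed-point equation then has a root with $\mu(\bar x,m_1)$ within $O(\eps)$ of $1$. The paper's assertion that $\Delta(\mu)>0$ for all $\mu<1$ is exact only at $\eps=0$, so this case needs separate control.

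Your sufficiency plan also targets something false. The ``hard part'' you describe is showing that pools of sizes at most one apart produce no positive feedback along within-pool reshuffling directions. That fails for every pool other than $\{\tfrac12\}$ that mixes over several messages. Lemma~\ref{lem: nomixpools}, the workhorse of the paper's ``only if'' direction that your proposal never isolates, builds an eigenvector along exactly such a reshuffling. Its positive eigenvalue is driven by the conditional variance of $x$ given the message divided by $\ut$. For singleton pools this quantity is of order $\eps^2/\ut$, which is where $\ut/\eps^2\to0$ matters.

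So the ``if'' direction can hold only for $\mu^*$ in which every state other than $\tfrac12$ has a unique maximizing message. The paper's contraction argument implicitly assumes this through $|G(x')|=1$. In that class the only payoff-equivalent multiplicity is at $x=\tfrac12$, whose messages have posterior essentially $\tfrac12$. The factors $\tfrac12-y(m)$ and $\tfrac12-\mathbb{E}[x\mid m,\mu]$ then kill the corresponding cross terms, and every other $\partial u/\partial Q$ inherits the exponentially small softmax derivative.

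The paper turns this into a local contraction of $U(Q)$ plus a weighted-norm Lyapunov function for $\dot Q\in\Omega_\eps\cdot(U(Q)-Q)$. Your Gershgorin route also closes in that class: after factoring $\bar\mu(x,m)$ out of each row, the diagonal is $-1+o(1)$ and the off-diagonal row sum is $o(1)$. But what makes the off-diagonal terms negligible is these exponentially small and vanishing factors, not the $O(\eps)\cdot O(1/\eps)$ bookkeeping you mention.
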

In the proof (relegated to the appendix) we argue that the equilibria which the dynamic learning process can converge to are attractors of a differential equation and that those attractors will be Connected and satisfy MSFR and SAPS when $\ut $ and $\varepsilon$ are small enough. 

To compare policies, we measure the payoff-relevant informativeness of a policy by the expected utility that the sender and the receiver obtain from it plus the variance of the state, $\frac{1}{12}\frac{K+1}{K-1}$, and normalized by this same variance. If the receiver does not have access to any recommendations, their maximal expected utility is equal to the negative of the variance of the state, $-\frac{1}{12}\frac{K+1}{K-1}$. By adding this to the expected utility and normalizing with it, the new measure ranges from 0 to 1. Formally, our normalized measure of welfare-relevant informativeness of a policy $\mu$ is given by
$$U(\mu)=\frac{\frac{1}{12}\frac{K+1}{K-1}-\mathbb{E}[(X-\mathbb{E}[X\mid m, \mu])^2]}{\frac{1}{12}\frac{K+1}{K-1}},$$
where the outer expectation is taken over the joint distribution of $m$ and $X$ induced by the messaging strategy $\mu$. Notice that this measure of informativeness is coarser than, for example the correlation between the state and the messaging strategy or entropy based measures such as the mutual information between the two. The reason for this is that there exist many strategies, and even equilibria of the underlying cheap talk game that can differ substantially in terms of their informativeness when measured by these more classical information measures but have equivalent payoff consequences for both parties. For example, consider a strategy where the sender sends message $0$ for both state $0$ and state $1$ and message $1$ for all other states. The posterior of the receiver after seeing either of those messages is $0.5$ and hence the receiver's action will be the same after both of them and this action coincides with the action the receiver would take were the sender to babble and send all messages with an equal probability after every state. Nevertheless, the messaging strategy above is considerably more informative than the babbling strategy - the information it conveys is just not payoff-relevant for either of the parties. 

We are now ready to state the main result of this section.
 \begin{theorem}\label{thm:infobound}
           There is a positive function $\nu(K)=O\left(\frac{1}{K}\right)$ with $\nu(K) \in [0,1]$  such that for all grid sizes $0<K<\infty$ there exists welfare threshold $\underline{U}_K\in (0,1)$ and  for all $\zeta>0$ there are $\ut>0, \varepsilon>0$ small enough:
            \begin{enumerate}
                \item For all $\mu_0$, if $ \mathbf{L}(\mu_0)=\{\mu^*\}$, then
            \begin{align*}
               U(\mu^*)\ge \underline{U}_K,
            \end{align*}
            with probability $1$. 
            \item The exists $\underline{\mu}\in \mathbf{NE}_{CS}$ with $U(\underline{\mu})=\underline{U}_K$.\newline For a positive measure set of $\mu_0$, $\mathbb{P}\left[\mathbf{L}(\mu_0) \subseteq \mathbf{B}_{\zeta}(\underline{\mu}) \right]>0$.
            \item $\underline{U}_K = 1-\nu(K)$.
            \end{enumerate}
        \end{theorem}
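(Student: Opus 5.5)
The plan is to reduce the statement to a finite combinatorial optimization using Theorems \ref{thm: Nashconv} and \ref{thm: attrchar}, and then to solve that optimization explicitly.

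\textbf{Step 1 (reduction).} If $\mathbf{L}(\mu_0)=\{\mu^*\}$, Theorem \ref{thm: Nashconv} gives, for $\ut,\eps$ small, some $\mu'\in\mathbf{NE}_{CS}$ with $\mu^*\in\mathbf{B}_{\zeta}(\mu')$ almost surely. The event $\mathbf{L}(\mu_0)\subseteq \mathbf{B}_\zeta(\mu')$ then has positive probability. By the ``only if'' direction of Theorem \ref{thm: attrchar}, $\mu'$ must be Connected and satisfy MSFR and SAPS.

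Since $U$ is continuous in $\mu$ on the finite-dimensional simplex $\Theta$, $|U(\mu^*)-U(\mu')|$ is small once $\zeta$ is small. Hence it suffices to define
\[
\underline U_K:=\min\{U(\mu):\mu\in\mathbf{NE}_{CS}\ \text{Connected, MSFR, SAPS}\}.
\]
This set is nonempty, since full revelation qualifies. I will show that the minimum is attained and lies in $(0,1)$.

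The continuity slack is handled in one of two ways. One option is to note that $U$ takes only finitely many values on the payoff-relevant partitions induced by such equilibria. The other is to shrink $\zeta$, absorbing the slack into the choice of $\ut,\eps$; this matches the quantifier order ``for all $\zeta$ there are $\ut,\eps$''.

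Part 2 then follows from the ``if'' direction of Theorem \ref{thm: attrchar} applied to a minimizer $\underline\mu$. To get a positive-measure set of $\mu_0$, I take $\mu_0$ in a small neighbourhood of $\underline\mu$. This neighbourhood lies in its basin of attraction, which is where the attractor argument in the proof of Theorem \ref{thm: attrchar} lives.

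\textbf{Step 2 (structure).} By connectedness together with the equilibrium property, the payoff-relevant content of any such $\mu$ is a partition of the grid $\mathbf{S}_K$ into consecutive intervals (pools). Within-pool mixing over messages with identical posteriors does not affect $U$, so only the partition matters. MSFR forces $\{1/2\}$ to be a singleton pool; for odd $K$ it is a grid point, and for even $K$ the condition is vacuous. SAPS says that adjacent pools differ in cardinality by at most one.

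So, starting from size $1$ at the centre, pool sizes can grow by at most one per step moving outward. Write the sizes on one side as $n_1=1,n_2,\dots$ with $n_{j+1}\le n_j+1$ and $\sum n_j\approx K/2$. A pool of $n$ consecutive grid points contributes within-pool variance $\frac{n(n^2-1)}{12(K-1)^2}$ times its weight $n/K$.

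\textbf{Step 3 (optimization).} I will minimize $U$, that is, maximize $\sum_j n_j^2(n_j^2-1)$ subject to the size constraints. Because this objective is convex in the sizes, the maximizer makes pools grow as fast as allowed: $n_j=j$, with a possible adjustment of the last pool to fit the total. This gives $J\approx\sqrt{K}$ pools per side. The loss is then
\[
\sum_j \frac{j^4}{12K(K-1)^2}\approx \frac{J^5}{60K^3}=O(K^{-1/2}),
\]
which contradicts the claimed $O(1/K)$.

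This is therefore the main obstacle. Either SAPS combined with the equilibrium conditions is more restrictive than this count suggests, or the claimed rate must be checked carefully. I would first verify, using the discrete Crawford--Sobel incentive constraints at pool boundaries with $b=0$, whether the arithmetic profile $n_j=j$ is actually an equilibrium. In the continuum analogue, $b=0$ forces boundaries to be midpoints of adjacent pool actions, which requires equal adjacent lengths. The discrete version allows only a difference of at most one, so growth may be possible only at a few boundaries, bounding pool sizes by $O(1)$ and giving loss $O(1/K)$.

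Concretely, I would show that the boundary type $x$ of a pool of size $n$ adjacent to a pool of size $n+1$ is indifferent or better only when parity conditions hold. From that I would derive that sizes are bounded uniformly in $K$, and then compute the resulting $\nu(K)=1-\underline U_K=O(1/K)$ from the explicit worst partition.

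Finally, $\underline U_K>0$ because the worst partition still separates the centre from the extremes, and $\underline U_K<1$ for $K\ge 3$ because the extreme pools have size at least two.
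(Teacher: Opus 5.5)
Your reduction in Steps 1--2 matches the paper's route, and so does your first instinct in Step 3: in the worst admissible partition, pool sizes grow by one at each boundary moving outward from the singleton middle pool. The gap is that Step 3 miscomputes the loss, and the rest of your plan is built on an obstacle that does not exist.

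A pool of $n$ consecutive grid points has conditional variance $\frac{n^2-1}{12(K-1)^2}$. The quantity $\frac{n(n^2-1)}{12(K-1)^2}$ you wrote is the within-pool \emph{sum} of squared deviations, so multiplying it by the weight $n/K$ counts the factor $n$ twice. For a symmetric partition with pool sizes $N_i$ on each side, the correct unnormalized loss is $-U(\mu)=\frac{1}{6K(K-1)^2}\sum_i (N_i^3-N_i)$. This is cubic, not quartic, in the pool sizes. With sizes $1,2,\dots,J$ per side and $J\approx\sqrt K$, you get $\sum_j j^3\approx J^4/4\approx K^2/4$. The unnormalized loss is then about $\frac{1}{24K}$, so $\nu(K)\approx\frac{1}{2K}=O(1/K)$, and there is no contradiction. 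As a check, for $K=21$ the pool sizes $1,2,3,4$ moving outward from the middle singleton give $U\approx 0.980$, which is the value the paper reports.

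The remedy you then propose is false: pool sizes are \emph{not} bounded uniformly in $K$, and no parity condition intervenes. Take $b=0$ and adjacent pools of sizes $n$ and $n+1$. The boundary type of the smaller pool is $\frac{n-1}{2}$ grid steps from its own action and $\frac{n+2}{2}$ from the neighbour's. The boundary type of the larger pool is $\frac{n}{2}$ steps from its own action and $\frac{n+1}{2}$ from the neighbour's. Both incentive constraints therefore hold strictly for every $n$. A size difference of $2$ produces exactly the knife-edge indifference that Lemma~\ref{lem: SAPS} eliminates. So the arithmetic profile is a strict equilibrium, robust to small $\varepsilon$. It is Connected and satisfies MSFR and SAPS, hence it is an attractor by Theorem~\ref{thm: attrchar}. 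Consistently, the paper's worst case has largest pool $M(K)=\lfloor \hat{n}_K+1\rfloor$ with $\hat{n}_K=-\frac{3}{2}+\sqrt{\frac{5}{4}+K}$, which is of order $\sqrt K$. Bounded pools would in any case give a loss of order $1/K^2$ and would identify the wrong worst-case partition for part 2.

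Once the variance formula is corrected, what remains is to make your convexity heuristic rigorous. The paper does this in three steps:
\begin{enumerate}
\item Show that any minimizer can be replaced by one whose sizes are weakly increasing outward.
\item When $\hat{n}_K$ is an integer, use an exchange argument with convexity of $x^3$ to show that the strictly increasing profile $2,3,\dots,\hat{n}_K+1$ is optimal.
\item Otherwise, sandwich the true minimum between two payoffs: that of the infeasible sequence $2,\dots,M(K)+1$, and that of the feasible increasing sequence padded with singleton pools. Both are $1-O(1/K)$ after normalization.
\end{enumerate}
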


The first part of the theorem shows that there exists a lower bound on the information that the algorithm may learn, giving a robust guarantee for the informativeness of languages learnt by an RL algorithm. The second part shows that the lower bound is learnt with a positive probability, making it tight. The last part shows that when the grid becomes increasingly fine, the lower bound converges to $1$ from below. In other words, the more complicated the messaging problem becomes, the better is the guarantee. 

The proof of the result is constructive.\footnote{See Appendix \ref{app: attrcharproof}.} We use the three properties presented above to construct a worst-case scenario and show that that worst-scenario can always be learnt with positive probability. More specifically, as required by MSFR, we start from a fully revealing strategy in the middle. Connectedness then requires that any pools on either side of the middle state must be connected. Following SAPS, we then construct maximal pools by bunching subsequent states away from the middle into pools so that neighboring pool sizes differ by exactly 1. As $K$ grows, it will commonly not be possible to `fit' a sequence of pools with this property into the available state grid. Nevertheless, we show that the feasible worst-case policy closely tracks such a construction, and importantly, the largest pool that can be achieved in such a policy only grows at a rate of $\sqrt{K}$. This argument then yields a lower bound on the welfare of the cheap talk game that is captured by Theorem \ref{thm:infobound}, which approaches $1$ as $K$ grows. Two examples of worst-case policies that may be learnt according to this Theorem are shown in Figure \ref{fig: lbpols}.

\begin{figure}[t]
\centering
\begin{subfigure}{.5\textwidth}
  \centering
  \includegraphics[width=.9\linewidth]{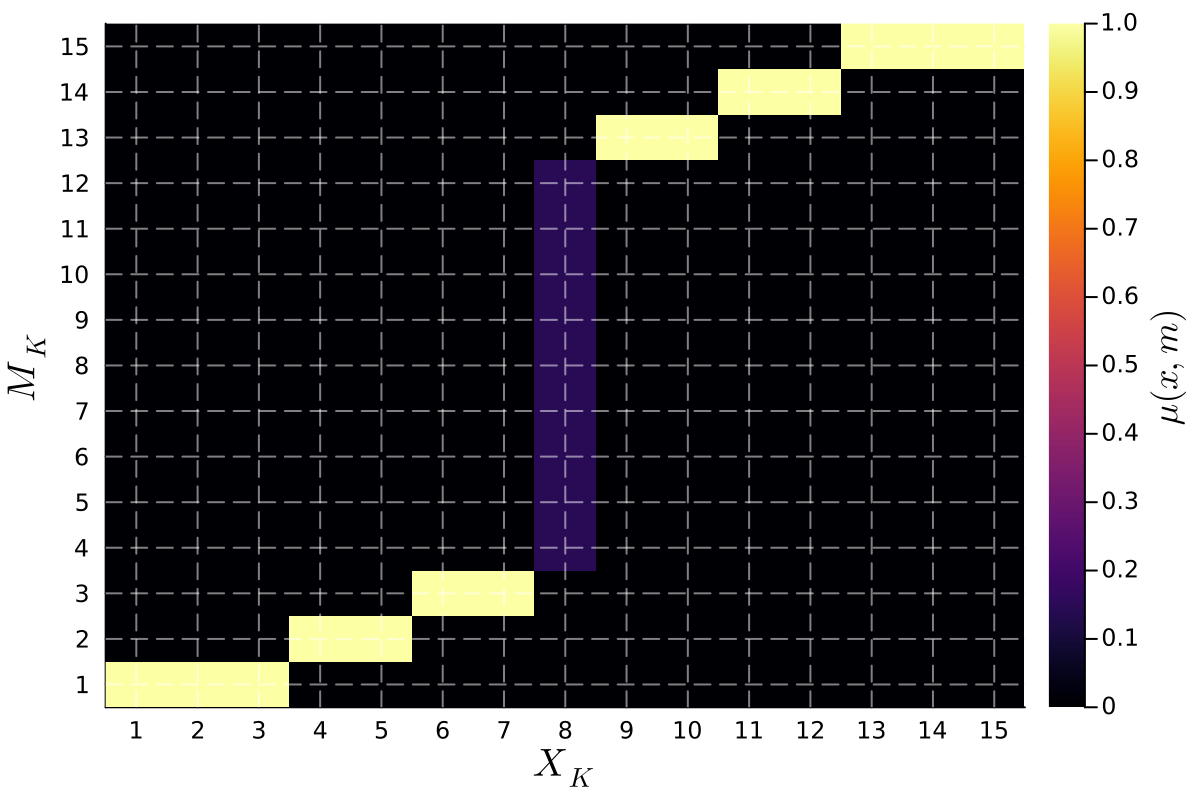}
  \caption{\footnotesize }
  \label{fig: lbpol K15}
\end{subfigure}%
\begin{subfigure}{.5\textwidth}
  \centering
  \includegraphics[width=.9\linewidth]{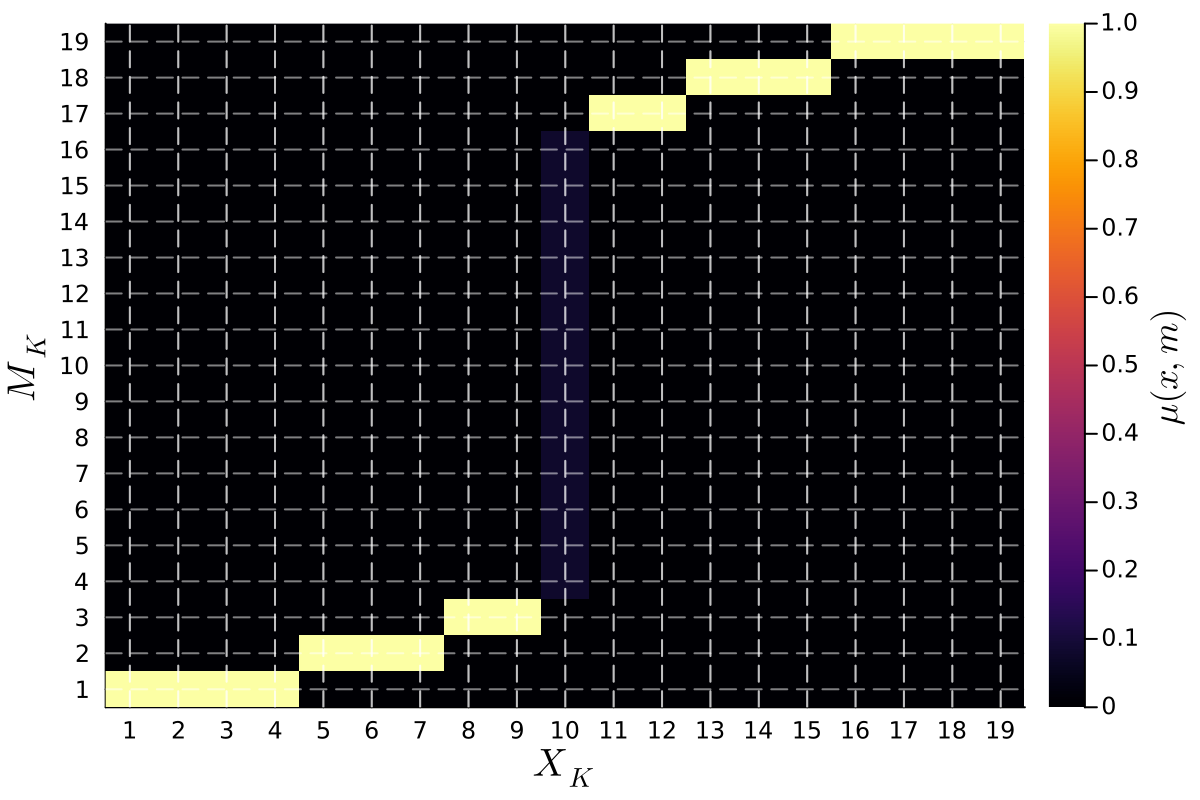}
  \caption{\footnotesize}
  \label{fig: lbpol K19}
\end{subfigure}
\caption{\footnotesize Worst-case policies achieving $\underline{U}_K$ for $K\in \{15,19\}$, represented by a heatmap. The $x$-axis refers to states $x\in X_K$, $y$-axis to messages $m\in M_K$, while color indicates probability mass.  In (a), $K=15$ and the grid cannot fit the sequence of pools that always increase by $1$, so the construction repeats pool sizes. The worst case was computed by brute force according to the constraints given by SAPS, MSFR, and connectedness.  In (b),  where $K=19$, we have an example where the sequence fits exactly, and no computation is necessary. We implement the policy with middle state ($8$ in (a), $10$ in (b)) fully revealing by mixing over messages, as is typically learnt by the algorithm as shown in Figure \ref{fig: nobias_pol}.}
\label{fig: lbpols}
\end{figure}

\textit{How often does the algorithmic learner hit the information lower bound in practice?} 

This is a question hard to tackle in theory but straightforward to answer in practice running simulations experiments. We set $\ut=10^{-4}$, $\varepsilon=10^{-3}$, $K=21$, $\alpha_t=0.05$ for all $t$ and we take $\tau$ to $\ut$ at rate $e^{-\gamma t}$, for $\gamma = 10^{-4}$. We simulate the game $1000$ times until convergence starting it both from a fully babbling and a fully revealing initial strategy. All our simulation runs converge, Theorem \ref{thm: Nashconv} ensuring that every resulting limit point is a Perfect Bayesian Equilibrium of the underlying cheap-talk game.

Figure \ref{fig: nobias_heat} shows the histogram of $U(\mu)$ over runs. 
\begin{figure}[t]
\centering
\begin{subfigure}{.5\textwidth}
  \centering
  \includegraphics[width=.9\linewidth]{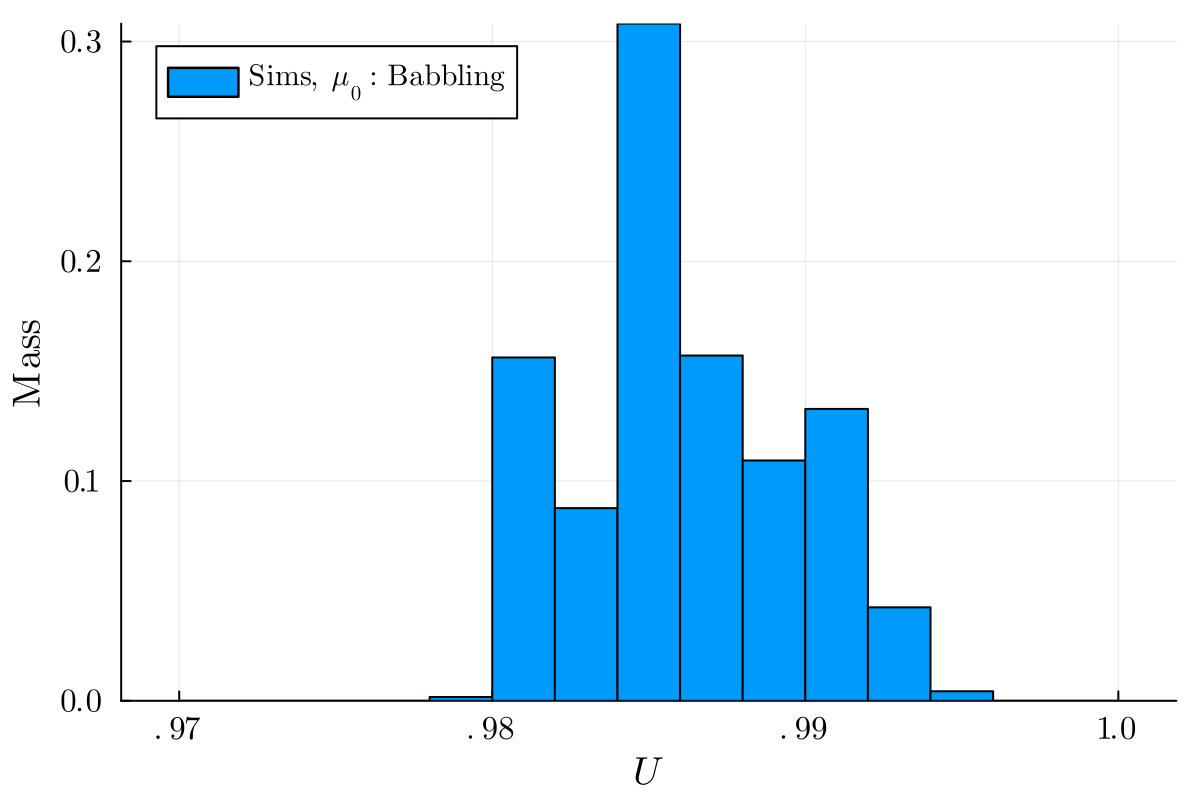}
  \caption{\footnotesize }
  \label{fig: nobias_heatbab}
\end{subfigure}%
\begin{subfigure}{.5\textwidth}
  \centering
  \includegraphics[width=.9\linewidth]{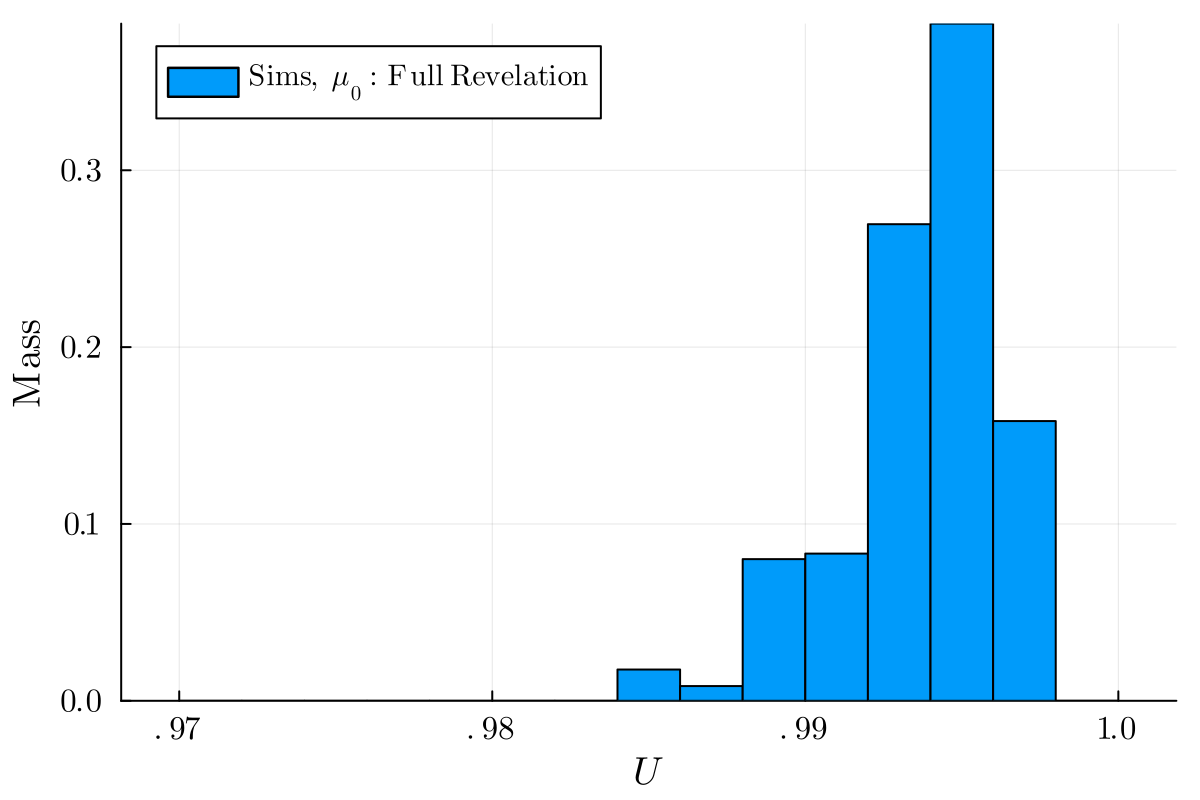}
  \caption{\footnotesize}
  \label{fig: nobias_heatfr}
\end{subfigure}
\caption{\footnotesize Normalized welfare of final policy in simulation for differing initial policies $\mu_0$ given $K=21$. The computation finds all pure equilibria featuring connected pools, i.e. if states $x,x'$ send message $m$, so will any $x'' \in [x,x']$. Any more general set of equilibria is computationally infeasible. (a) shows the welfare measure of final policies due to babbling initial policies, while (b) shows those due to fully revealing initial policies.  $T=10^8$, $N=1000$. The minimum and maximum level achieved are $.980$, $.995$ under babbling initial, and $.985, .997$ under full revelation initial. Here, $\underline U_K = .980$. }
\label{fig: nobias_heat}
\end{figure}
As predicted by the theory, even if the algorithm is started from a fully babbling initial strategy, it eventually learns highly informative final messaging strategies. Initializing with a fully revealing policy leads to consistently higher welfare levels than under babbling, but full revelation (a welfare equal to $1$) is never observed. We can use our constructive proof for the lower bound to calculate the threshold $\underline{U}_K$ for the case of our simulation experiments. In the environment underlying Figure \ref{fig: nobias_heat}, the bound is 98\% of the available utility which is exactly equal to the lowest value from the simulations.


Reward-driven adaptation comes with strings attached: while babbling always unravels, even initializations of full revelation lead to final policies featuring large pools close to the bounds of the state space, such as shown in Figure \ref{fig: nobias_polfr}.  Somewhat counterintuitively, the same reward-driven logic that unravels babbling may also curb the algorithm from learning to play the most informative equilibrium.\footnote{Figure \ref{fig: nobias_polbab} shows a typical final policy when the algorithm is initialized with babbling. } 
\begin{figure}
\centering
\begin{subfigure}{.5\textwidth}
  \centering
  \includegraphics[width=.9\linewidth]{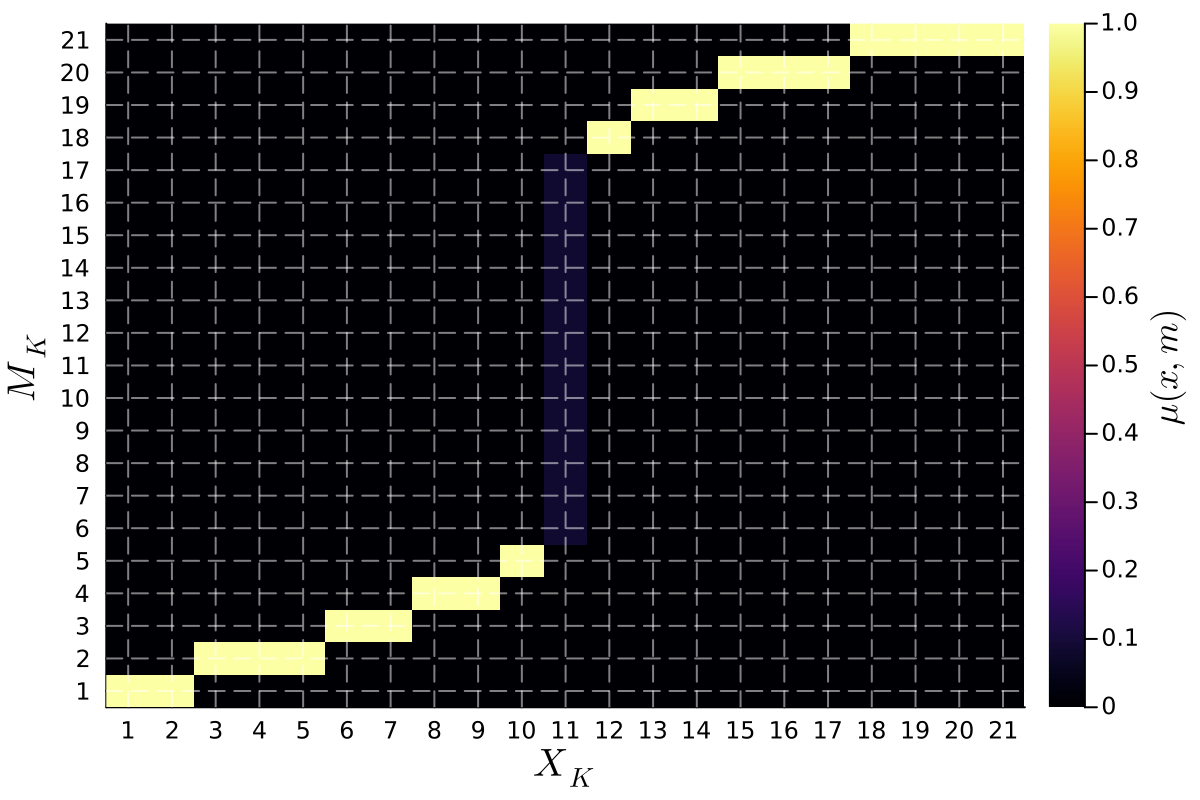}
  \caption{\footnotesize }
  \label{fig: nobias_polbab}
\end{subfigure}%
\begin{subfigure}{.5\textwidth}
  \centering
  \includegraphics[width=.9\linewidth]{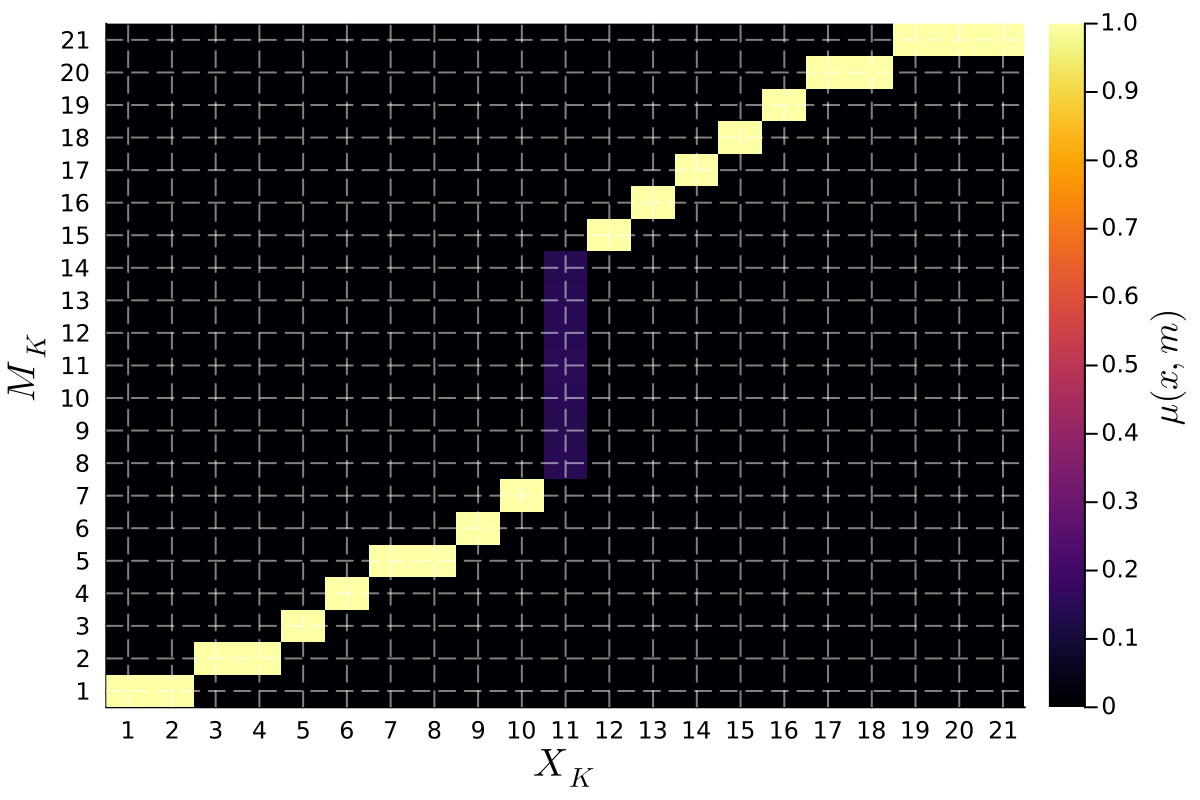}
  \caption{\footnotesize}
  \label{fig: nobias_polfr}
\end{subfigure}
\caption{\footnotesize Example of final policy learnt: we show the policy that achieves the median welfare level by normalized measure among our simulations. Here, (a) refers to the babbling initialization with median level of $0.986$, and (b) the fully revealing initialization with median level of $0.995$.  $T=10^8$, $N=1000$.}
\label{fig: nobias_pol}
\end{figure}

To see why full revelation easily unravels in the softmax setting, consider a sender who starts from a fully revealing policy and, upon observing state $x<<1/2$, sends the corresponding message. Because the sender uniformly experiments with non-negligible probability, the receiver knows that this message may also be sent in other states. Thus, after observing this message, the receiver does not attach probability one to the event that the state is $x$, but instead forms a posterior that is pulled toward the prior and, more precisely, toward those other states that occasionally send the same message. The induced action is therefore more cautious, in the sense that it lies strictly between $x$ and the prior mean $\tfrac{1}{2}$. As long as this effect is strong enough, the sender will learn that for states away from the prior, they may benefit from exaggerating,  that is, to induce the receiver to believe that the state is more extreme than it actually is by choosing the truthful message of a lower state $x'<x$. The resulting gain from exaggeration reinforces the use of such messages, which in turn pushes the receiver’s best response even closer to the prior. In equilibrium, this feedback loop generates pooling patterns in which larger pools appear near the boundaries of the state space, in a manner reminiscent of the classic \cite{crawford1982strategic} analysis. This is the pattern we see in Figure \ref{fig: nobias_pol}.\footnote{Even though $\varepsilon$ in our experiments is negligible, the algorithm initially starts with high levels of near-uniform exploration. Following the forces outlined here, during the periods of high exploration it may learn an equilibrium as shown in Figure \ref{fig: nobias_pol}. This is also an equilibrium for lower levels of exploration, and hence the learner may get stuck at that policy.}

The good news is that if one takes $\tau$ to a very low $\ut$ very slowly, the algorithm nearly always escapes these bad equilibria in our simulations and finds the fully revealing policy which is also a stable perfect Bayesian equilibrium of the learning dynamic. However, we will see in the next section that slow decay rates of $\tau$ lead to bad welfare outcomes when the sender's and the receiver's interests are \textit{not} aligned. Hence, even if the algorithm's designer is very patient, and does not care about the welfare cost associated with the excessive experimentation, they may still want to choose faster rates of decay if they are unsure about the existence of a preference bias.

\section{Positive Bias}

Consider now the case in which the incentives of the sender and the receiver are misaligned ($b \neq 0$). In this environment, the static Crawford–Sobel game admits only coarse communication: full revelation is no longer an equilibrium while babbling still is, and the set of equilibria consists of a finite number of ordered partitions of the state space. We find that the bias introduces new forces into the learning problem, modifies the shape of the limiting pooling structure, and fundamentally changes how exploration interacts with incentives.

Our first finding is that under reward-driven learning, the Crawford–Sobel equilibria lose their predictive power. To this end, define $\underline{b}_K = \frac{1}{2(K-1)}>0$.

    \begin{theorem}\label{thm: Posbiasnolearn}
        Let $b>\underline{b}_K$.
 For all $\zeta>0$ and $\ut>0, \varepsilon>0$ small enough, for all $\mu_0$, all $\mu \in \mathbf{NE}_{CS}$:  $\mathbb{P}\left[\mathbf{L}(\mu_0) \subseteq \mathbf{B}_{\zeta}(\mu) \right]=0$.
        \end{theorem}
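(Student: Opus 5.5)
We will use the stochastic-approximation representation from Proposition \ref{prop: stochapproxQ}. As $\alpha_t$ decays, the $Q$-values track an ODE $\dot Q = F(Q)$. With $\beta=0$, each component is
\[
F(Q)(x,m)=\bar\mu(x,m,Q)\,\big(u(x,m,Q)-Q(x,m)\big)/K .
\]
Because $\eta$ has nondegenerate variance, the noise is persistent, and standard nonconvergence results (Pemantle; Brandière–Duflo) give a clear criterion: $\mathbf{L}(\mu_0)$ can concentrate near a point with positive probability only if that point is (close to) a Lyapunov-stable rest point or an attractor of this ODE. The plan is therefore to show that, when $b>\underline{b}_K$ and $\ut,\eps$ are small, no rest point of $F$ in a $\zeta$-neighbourhood of any $\mu\in\mathbf{NE}_{CS}$ is stable. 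Each such rest point must have an unstable direction along which the noise excites the dynamics.

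\textbf{Step 1: classify the equilibria.} With $b>\underline{b}_K$, every equilibrium is partitional and connected, as Crawford–Sobel show for positive bias. Take any equilibrium and any pool $P=\{x^{(j)},\dots,x^{(l)}\}$. If $|P|=1$ for every pool, the equilibrium would be full revelation. But then type $x^{(k)}$ would profit from sending $x^{(k+1)}$'s message, since $b>\frac{1}{2(K-1)}$ means $x^{(k+1)}$ is closer to $x^{(k)}+b$ than $x^{(k)}$ is. So full revelation is not an equilibrium. More generally, the same comparison with $b>\underline{b}_K$ rules out a singleton pool sitting directly below an adjacent pool whose induced action lies within one grid step. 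Hence every equilibrium contains either a pool of size at least 2 or a boundary type that is exactly or nearly indifferent between two adjacent messages.

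\textbf{Step 2: linearize at the perturbed rest point and exhibit the unstable direction.} Fix $\mu\in\mathbf{NE}_{CS}$ and the nearby rest point $Q^*$. With $\ut,\eps$ small, $Q^*(x,m)\approx u(x,m,Q^*)$ on used pairs. Take the boundary between two adjacent pools $P_i<P_{i+1}$, with induced actions $y_i<y_{i+1}$. Perturb by raising $Q(x,m_{i+1})$ at the top type $x$ of $P_i$. This shifts mass of type $x$ onto $m_{i+1}$, which lowers $y_{i+1}$ toward $x$. The top type's payoff from $m_{i+1}$ is $-(x+b-y_{i+1})^2$. By the equilibrium indifference or near-indifference at the boundary and $b>0$, $x+b$ lies near the midpoint of $y_i$ and $y_{i+1}$ but strictly above $y_i+b$-adjusted, so lowering $y_{i+1}$ strictly increases this payoff. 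The lowest type of $P_{i+1}$ loses only second order. The Jacobian block $\partial F/\partial Q$ in this pair of coordinates therefore has a positive eigenvalue. The key quantity is of order $\partial u/\partial \mu \cdot \partial \mu/\partial Q\sim b\cdot \ut^{-1}\mu(1-\mu)$, and it must dominate the contraction $-\bar\mu$ coming from the $-Q$ term. Assumption (Q)(2), $\ut/\eps^2\to 0$, is exactly what guarantees that the gain term dominates and that the required messages carry mass at least of order $\eps$. If the boundary pools are instead singletons, I would use the upward-mimicking deviation from Step 1 in the same way. Babbling is handled as in the no-bias unraveling argument, since the bias does not affect that first-order gain.

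\textbf{Step 3: conclude.} By Theorem \ref{thm: Nashconv}, any singleton limit must be near $\mathbf{NE}_{CS}$. Step 2 shows every rest point near $\mathbf{NE}_{CS}$ is linearly unstable, so the nonconvergence theorem gives $\mathbb{P}[\mathbf{L}(\mu_0)\subseteq\mathbf{B}_\zeta(\mu)]=0$. One caveat: $\mathbf{L}(\mu_0)$ could be a nonsingleton set inside the ball. This is excluded by noting that any internally chain-transitive set of the ODE inside $\mathbf{B}_\zeta(\mu)$ must contain that unstable rest point and its unstable manifold, which leaves the ball.

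\textbf{Main obstacle.} The hardest part is Step 2: showing uniformly, over all (possibly non-pure, knife-edge) equilibria, that the destabilizing eigenvalue is strictly positive after the $\ut,\eps$ perturbation. The threshold $b>\underline{b}_K$ has to enter precisely there.
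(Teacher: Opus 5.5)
\textbf{There is a genuine gap: Step 2 fails in the generic case, and the idea that actually drives the result is missing.}

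On the discrete grid, Crawford--Sobel partitions typically have boundary types that \emph{strictly} prefer their own pool's message. The indifference or near-indifference you invoke holds in the continuum model, not in $\Gamma$. At such a boundary the softmax puts mass $\mu(x,m_{i+1})\approx e^{-\Delta/\ut}$ on the neighbouring message, where $\Delta>0$ is the payoff gap. So your loop gain, of order $\ut^{-1}\mu(1-\mu)$, is exponentially small and cannot beat the $-\bar\mu$ contraction. Assumption (Q)(2) does not help. The $\eps$-floor enters $\bar\mu$, but not $\partial\bar\mu/\partial Q=(1-\eps)\ut^{-1}\mu(1-\mu)$. This is the same computation the paper uses in the ``if'' part of Theorem~\ref{thm: attrchar} to show that strict pure configurations are local contractions of \eqref{eq: ODE}. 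The direction you propose is therefore stable, and you have no argument against strict pure partitional equilibria. Your Step~1 also stops at ``full revelation is not an equilibrium,'' which is not the consequence of $b>\underline{b}_K$ that is needed.

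The missing idea is that exploration gives meaning to \emph{unused} messages. The paper's proof runs as follows.
\begin{enumerate}
\item The paper uses $b>\underline{b}_K$ to show that the type just below a fully revealed state prefers to pool upward. Hence every equilibrium is a non-trivial partition, and in particular $\{1/2\}$ is never a pool on its own.
\item If some pool mixes over several messages, Lemma~\ref{lem: nomixpools} gives linear instability. The unstable direction splits the pool's messages apart, not your boundary direction. Its gain is of order $\ut^{-1}/N$ times the pool's conditional variance, which is not small.
\item If the policy is pure, pooling leaves $\mathbf{M}_0\neq\emptyset$. By Lemma~\ref{lem: unusedM}, every unused message is then decoded as $\tfrac12+o(\nu/\eps)$, so a type with $x+b\approx\tfrac12$ strictly gains by switching to one. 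Hence no rest point of \eqref{eq: ODE} lies in $\mathbf{B}_\zeta(\mu)$, and Proposition~\ref{prop: stochapproxQ}(3) and Theorem~\ref{thm: Nashconv} exclude convergence there, with no eigenvalue computation at all.
\item In the knife-edge case where that type is only indifferent, the learner must mix equally over the unused and pool messages, which falls back to Lemma~\ref{lem: nomixpools}.
\end{enumerate}
This is where the bias and the threshold $\underline{b}_K$ actually enter, not in a boundary-feedback eigenvalue.

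Separately, your Step~3 claim is unsupported: you assert that any internally chain-transitive set inside $\mathbf{B}_\zeta(\mu)$ must contain an unstable rest point and its unstable manifold. The paper does not treat non-singleton limit sets either, so this is secondary.
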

Due to discreteness, only bias levels larger than a positive threshold will change incentives from the no-bias case to this case. This threshold $\underline{b}_K$ ensures at least half a stepsize between each discrete state. The theorem therefore rules out the uninteresting case $b< \underline{b}_K$. 

 Theorem \ref{thm: Posbiasnolearn} together with Theorem \ref{thm: Nashconv} implies that all learnt outcomes must indefinitely vary over time. This follows since if learning converges to a single strategy for the sender, then it must result in a PBE.  Due to the Theorem above, none of these equilibria are limit outcomes of the game with a positive bias.  

Exploration ensures that the receiver never interprets messages exactly as prescribed by any fixed pooling equilibrium, and the sender continually adjusts messages in response to this perceived distortion. As a consequence, the learning dynamics cannot settle. 

To gain traction on the problem, we perform numerical analysis via simulations using the same parametrization as in the previous section except we now set $b\in\{0.1,0.2\}$. 

Not surprisingly, all of our simulations for different levels of bias end with cycling behavior. To visualize this in Figure \ref{fig: bias_cyc}, we show two example simulations, where we track the last 500 iterations in which the policy changed by more than $0.2$ in at least one entry, and take the average over those iterations. The strategy in the figure is characterized by one or more stable pools at the top of the state space. After these pools there is a cutoff and below this cutoff the policy is cycling: Each type strictly below the cutoff learns to  imitate at least one type above it. Those types escape to playing actions used by some of the types above them. Eventually, the message that was initially used by the lowest type becomes practically only used due to the small baseline uniform experimentation $\varepsilon$. Hence, the receiver associates this message more and more with their prior belief. When types close to the middle learn this, they will start playing the message and hence escape any lower types they had joined with. This ``escalator'' pattern keeps rotating in a stable way.  

\begin{figure}
\centering
\begin{subfigure}{.5\textwidth}
  \centering
  \includegraphics[width=.9\linewidth]{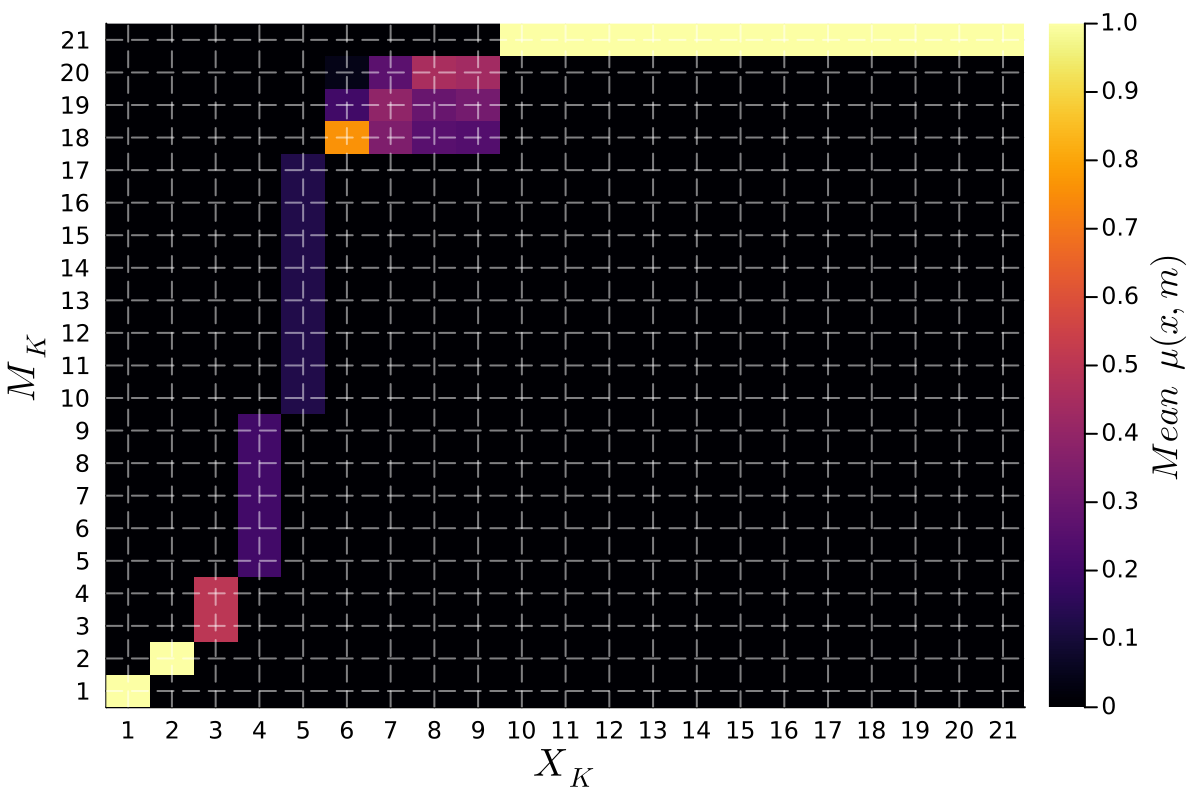}
  \caption{\footnotesize }
  \label{fig: bias_cycb1}
\end{subfigure}%
\begin{subfigure}{.5\textwidth}
  \centering
  \includegraphics[width=.9\linewidth]{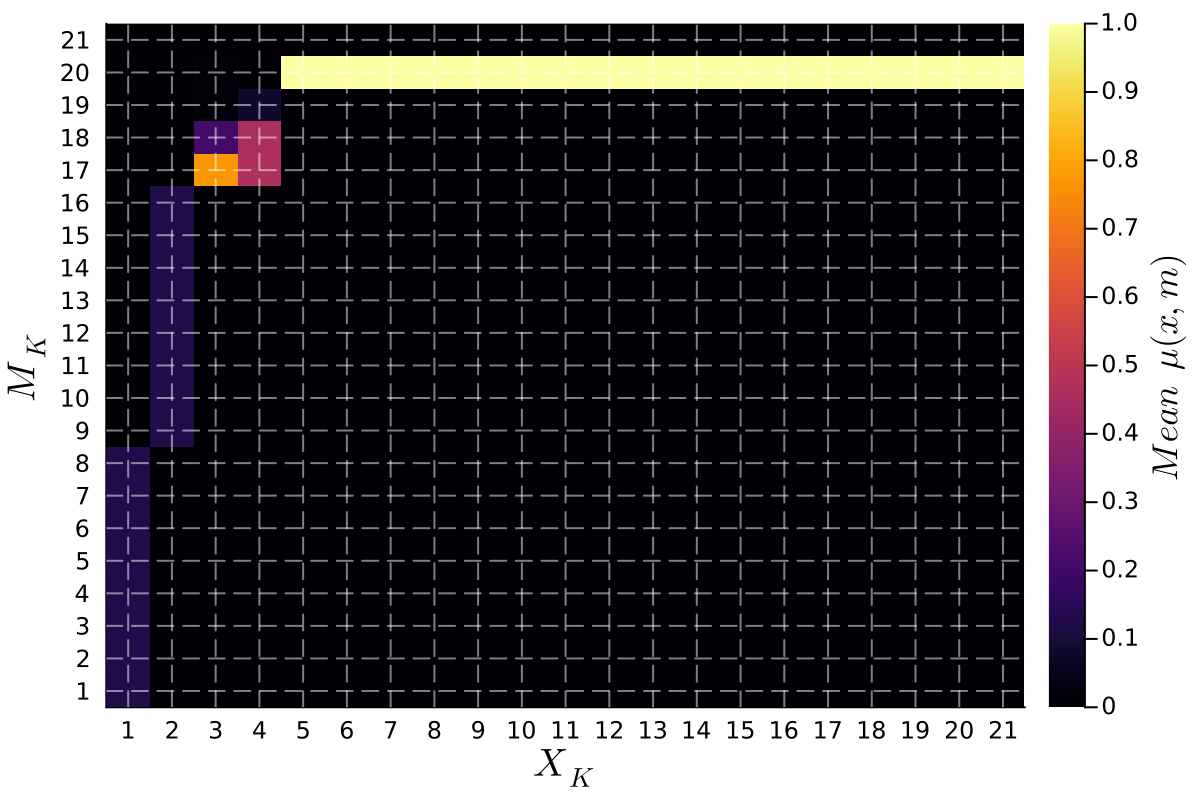}
  \caption{\footnotesize}
  \label{fig: bias_cycb2}
\end{subfigure}
\caption{\footnotesize Example of cyclical policy for two bias levels. We track the last 500 iterations whenever a policy changed by more than $0.2$ in at least one $(x,m)$ entry, and take an average. (a) refers to bias level $0.1$, (b) refers to a bias of $0.2$.}
\label{fig: bias_cyc}
\end{figure}

\textit{How much information is transmitted?}

Cycling does not hinder information transmission. Even though the language employed is constantly changing, the outcome is closer to full revelation than the \textit{best} Crawford-Sobel equilibria and consequently yields considerably higher welfare to both the sender and the receiver. This result can be seen in Figure \ref{fig: avg payoffs} where we plot the ratio between average welfare of the players in our simulations for different levels of bias and their expected welfare in the associated best Crawford-Sobel equilibrium. 

\begin{figure}[h]
      \centering
  \includegraphics[scale =1]{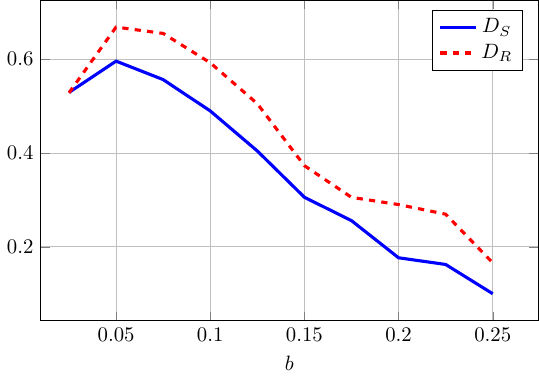}
  \caption{\footnotesize Payoff ratio $D_n = \frac{U^n-U^n_{CS}}{|U^n_{CS}|}$ for Sender and Receiver between payoff $U^n$ averaged over all final policies from simulation runs, and best PBE payoff $U^n_{CS}$, $n\in \{S,R\}$. The expected payoff is computed for the final policy of each simulation run, then averaged, given $N=1000$ simulations and $T=10^7$.}
  \label{fig: avg payoffs}
\end{figure}

A direct implication of this result is that when faced with a preference misalignment, if possible, an algorithm's designer like AirBnB has incentives to delegate the whole recommendation system to a learning algorithm. This is true even if the designer was able to solve for the best Crawford-Sobel equilibrium of the underlying game, as the algorithm's cycling communication strategy guarantees the sender a higher payoff than the best Crawford-Sobel equilibrium. Hence, in these types of communication problems delegating the recommendations to an algorithm not only simplify the designer's problem by freeing them from having to correctly estimate all model parameters but may also lead to better outcomes than what the designer can hope to reach without such commitment.

\begin{figure}[h]
      \centering
  \includegraphics[scale =1]{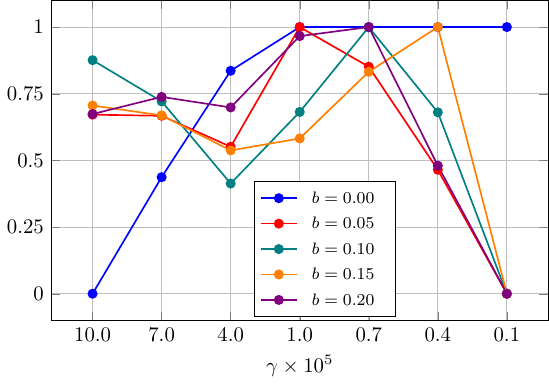}
  \caption{\footnotesize Average sender payoff over simulation runs for different exploration decay speeds $\gamma$ (x-axis), and different bias levels $b$ (series). To ease comparisons across bias levels, each series has been re-scaled to range from $0$ (minimal payoff among all series) to $1$ (maximal payoff among all series). The expected payoff is computed for the final policy of each simulation run, then averaged, given $N=1000$ simulations and $T=10^7$.}
  \label{fig: avpay_speeds}
\end{figure}

Finally, Figure \ref{fig: avpay_speeds} shows how the sender's payoff evolves in our simulations as we decrease the decay of $\tau$. In a striking contrast to the case of fully aligned interests, the final cycling policy eventually becomes increasingly worse for the sender, the slower the decay becomes. Hence, even a very patient designer of an algorithm who is not sure about the existence of a preference misalignment may want to use a faster decay speed than what would guarantee full revelation when preferences are fully aligned. In other words, the critical trade-off is not necessarily only about how much the designer loses because of an overly long experimentation period but also how much is lost due to convergence to a bad policy in case it turned out that the sender's and the receiver's preferences were in fact misaligned.

\section{Vocabulary selection and robustness}

An additional insight, relevant for applications, concerns the choice of vocabulary. For any given long-run communication pattern, there typically exist many different policies (i.e. mappings from states to messages) that could support it, and the assumption that the receiver fully discerns between all such vocabularies is in some sense strong. In applications there is usually some `natural' vocabulary, i.e. a predetermined one-to-one map between states and messages. This could be easily captured in the above framework by assuming that the natural language for $x$ is $m=x$ and by adding either some small cost for deviating from this association or assuming that a fraction of receivers are naive in the sense that they always interpret messages at face value. 

Across these variations, the same qualitative learning patterns re-emerged in our simulations, and the dynamics further selected the natural vocabulary. This suggests that in applications, the learning process itself helps anchor a simple and interpretable set of messages, making the overall communication scheme easier for naive receivers to learn and understand.

\section{Conclusion}

Taken together, our results show that reward-driven learning generates robustly informative communication patterns even in environments where most static cheap-talk equilibria are either less informative or fail to predict behavior. In the no-bias case, the algorithm reliably converges to highly informative pooling structures, while in the biased case it never settles on a Crawford–Sobel equilibrium and instead continues to adjust messages over time. In practical terms, our results imply that digital platforms that rely on Q-learning–type algorithms are naturally protected against extremely uninformative communication outcomes.

We  see several interesting avenues for future research. Firstly, this paper's analysis considers the setting where the receiver learns to best respond faster than the algorithmic sender; it is natural to ask how outcomes may change if the receiver's adaption instead lags behind the sender's. We conjecture that such a setting may lead to less informative outcomes. As the sender (algorithm) would be optimizing against an `outdated' receiver policy, it would likely over-correct its messaging strategy to influence a behavior that has already shifted. This could lead to miscoordinated dynamic where the sender's policy becomes increasingly extreme, potentially causing the `escalator' cycles we observe under bias to expand or collapse into babbling as the receiver's slow updates fail to anchor on a meaningful signal.

Furthermore, we are curious about how the theoretical results of this paper fare in applied settings. As a next step, we are considering behavioral experiments to verify our claims and investigate whether human decision-makers perceive and adapt to the potentially cyclical `language' of algorithmic advice.

\clearpage
\nocite{*}
\bibliographystyle{ACM-Reference-Format}
\bibliography{ct_refs}

\appendix

\section{Proofs}
\subsection{Proof of Theorem \ref{thm: Nashconv}}\label{app: nashconv}
We prove a more general result that will be useful for subsequent results. First, note that to be more precise for $\alpha_t$ in \eqref{eq: Qlearn}, we have that stepsizes for each pair $(x,m)$ advance only if $(x,m)$ is \textit{visited}. Hence one must introduce $n_t(x,m)$, the visitation count of each pair $x,m$, and write $\alpha_{n_t(x,m)}$ as the effective stepsize for each entry of the $Q$-matrix. Then we can write \eqref{eq: Qlearn} more concisely as
\begin{align}\label{eq: Qlearnprecise}
    Q_{t+1}(x,m) &= Q_t(x,m) + \alpha_{n_t(x,m)}\mathbf{1}_{t,x,m}\left(U^S(y(m,Q_t),x,b) - Q_t(x,m) + R_t\right),
\end{align}
where $\mathbf{1}_{t,x,m}$ is the indicator function for the visit $(x_t,m_t)=(x,m)$, and $R_t = u_t - U^S(y(m,Q_t),x,b)$ is the estimation error. Consider the following ordinary differential equation on $Q$-values:
 \begin{align}\label{eq: ODE}
    \dot{Q}(x,m) = \bar \mu(x,m)\Big(u(x,m,Q) - Q(x,m) \Big).
\end{align}
Define 
\begin{align*}
    \mathcal{Q}(\mu) = \left\{Q: \mu(x,m,Q)=\mu^*(x,m) \text{ for all } (x,m) \in \mathbf{X}_K\times \mathbf{M}_K \right\}, 
\end{align*}
the set of $Q$-values consistent with policy $\mu^*$. Then:
\begin{proposition}\label{prop: stochapproxQ}
    \begin{enumerate}
    \item[]
        \item For all $\mu_0$, $\mathbb{P}[\mathbf{L}(\mu_0) =  \{\mu^*\}]>0$ if  $\mathcal{Q}(\mu^*)$ is an attractor under \eqref{eq: ODE}.
        \item For all $\mu_0$, $\mathbb{P}[\mathbf{L}(\mu_0) =  \{\mu^*\}]=0$ if  $\mathcal{Q}(\mu^*)$ is linearly unstable under \eqref{eq: ODE}.
\item If $Q$ is a rest point of \eqref{eq: ODE}, then for all $\zeta>0$ there is $\ut>0, \eps>0$ small enough, the policy $\mu$ induced by $Q$ satisfies $\mu \in \mathbf{B}_{\zeta}(\mu^*)$ for some $\mu^* \in \mathbf{NE}_{CS}$.
    \end{enumerate}
\end{proposition}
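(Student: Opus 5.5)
Parts (1) and (2) follow from stochastic approximation, and part (3) from a rest-point calculation.

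\textbf{Setting up the stochastic approximation.} I first cast \eqref{eq: Qlearnprecise} as an asynchronous Robbins--Monro scheme. The state is drawn uniformly, and the message is drawn from $\bar\mu(x,\cdot,Q_t)$. The expected increment of entry $(x,m)$ is therefore $\frac1K\bar\mu(x,m,Q_t)\bigl(u(x,m,Q_t)-Q_t(x,m)\bigr)$, up to a constant time rescaling; this matches \eqref{eq: ODE}. The error $R_t$ is a martingale difference with bounded variance, since $\eta$ has finite variance and payoffs are bounded on the bounded region where $Q$ lives. The region is bounded because $Q$ is a convex combination of bounded payoffs plus noise, which gives a.s. boundedness via $\sum\alpha_t^2<\infty$. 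Exploration $\eps>0$ makes every pair visited with frequency at least $\eps/K^2$, so the asynchronous stepsizes $\alpha_{n_t(x,m)}$ satisfy the standard Borkar conditions. The interpolated process $Q_t$ is then an asymptotic pseudotrajectory of \eqref{eq: ODE} once $\tau_t,\eps_t$ have settled near $\ut,\eps$, and the changing parameters are absorbed as a vanishing perturbation.

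\textbf{Parts (1) and (2).} For (1), I invoke Benaïm's attractor theorem: if $\mathcal{Q}(\mu^*)$ is an attractor, then every open neighbourhood of it is reached with positive probability. This holds because the noise $\eta$ has full support, so the noise can push $Q_t$ into the basin at a late time. After that, convergence to the attractor has positive probability. Continuity of softmax then gives $\mu_t\to\mu^*$. For (2), I use Pemantle/Brandière--Duflo nonconvergence. Linear instability gives a repelling direction, and the additive noise $\eta$ (with $\sigma_\eta^2>0$) excites every coordinate. Hence convergence to $\mathcal{Q}(\mu^*)$ has probability zero. The nondegeneracy of the noise in the unstable direction is exactly where the assumption on $\eta$ is used.

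\textbf{Part (3): rest points are near equilibria.} This is the key step. At a rest point, $\bar\mu>0$ forces $Q(x,m)=u(x,m,Q)=-(x+b-y(m))^2$, so the $Q$-values equal the true payoffs against the receiver's best response to $\bar\mu$. Softmax with temperature $\ut$ then gives, for any $m$ with $u(x,m)<\max_{m'}u(x,m')-\delta$, the bound $\mu(x,m)\le e^{-\delta/\ut}$. So $\mu$ is a $\delta$-best response, with small mass on the non-optimal messages.

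I then take a limiting argument as $\ut,\eps\to0$. By compactness, rest-point policies $\mu^{(n)}$ (with receiver responses $y^{(n)}$) have limit points $(\mu^\infty,y^\infty)$. Since $\bar\mu^{(n)}\to\mu^\infty$, the receiver's Bayes rule is continuous on the messages $\mu^\infty$ uses. In the limit, $\mu^\infty$ puts mass only on exact best responses to $y^\infty$, so $\mu^\infty$ is a sender strategy in a PBE. Off-path messages keep the limiting beliefs $y^\infty(m)$ inherited from the $\eps$-trembles, which are valid PBE beliefs.

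The main obstacle is that (3) must hold uniformly over rest points for a single choice of small $(\ut,\eps)$. I would handle this by contradiction: suppose a sequence $\ut_n,\eps_n\to0$ admits rest points at distance $\ge\zeta$ from $\mathbf{NE}_{CS}$. The compactness limit above then lands in $\mathbf{NE}_{CS}$, which is a contradiction. The delicate sub-step is messages whose use vanishes but whose posterior stays determined by the trembles. Since PBE allows arbitrary off-path beliefs, this is harmless. Assumption (Q)(2), that $\ut/\eps^2$ vanishes, ensures the softmax term is dominated by exploration and does not distort those posteriors.

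Theorem~\ref{thm: Nashconv} then follows from part (3). By the pseudotrajectory property, $\mathbf{L}(\mu_0)=\{\mu^*\}$ a.s. forces $\mu^*$ to come from a rest point of \eqref{eq: ODE}, since limit sets are internally chain transitive and a singleton chain-transitive set is an equilibrium.
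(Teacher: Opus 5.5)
Your proposal is correct. For (1) and (2) it follows the paper's argument: Borkar-style asynchronous stochastic approximation yielding \eqref{eq: ODE} with the $\bar\mu$ scaling, Bena\"{\i}m's attractor theorem, and a nonconvergence theorem driven by the nondegenerate noise $\eta$ (you cite Pemantle and Brandi\`ere--Duflo, the paper cites Bena\"{\i}m's Theorem 3.12).

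For (3) you reach the same limit-of-perturbed-equilibria conclusion by a more direct route. The paper first passes to the induced policy dynamics, a replicator equation with entropy-perturbed payoffs $u^H=u-\tau\log\mu$. It identifies rest points as Nash equilibria of that perturbed game, and then proves upper hemicontinuity at $(\eps,\ut)=(0,0)$ by contradiction, invoking continuity of $U^S$ in $\mu$. You instead read the logit fixed point directly off $Q=u(\cdot,Q)$, bound the mass on $\delta$-suboptimal messages by $e^{-\delta/\ut}$, and run the contradiction through compactness of pairs $(\mu,y)$.

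Besides being shorter, your version handles something the paper glosses over. $U^S$ is not continuous in $\mu$ at messages whose total probability vanishes in the limit, because Bayes' rule breaks down there. Your step of carrying $y^{(n)}$ along and using the freedom of off-path PBE beliefs is exactly what makes the limit land in $\mathbf{NE}_{CS}$. What the paper's detour buys is the interpretation of rest points as logit-perturbed equilibria, which frames its later stability analysis.

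Three minor remarks.

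\begin{enumerate}
\item Full support of $\eta$ is not among the paper's assumptions; the paper only assumes mean zero and finite positive variance. Something like it is needed to make the basin attainable from every $\mu_0$ once $\alpha_t$ is small. The paper's bare citation of Bena\"{\i}m's theorem silently needs this too, so you are making an implicit hypothesis explicit rather than introducing a gap.
\item Assumption (Q)(2) plays no role in (3). Once off-path beliefs are unrestricted, the relative size of $\ut$ and $\eps$ does not matter.
\item When deducing Theorem~\ref{thm: Nashconv}, note that $\mathbf{L}(\mu_0)=\{\mu^*\}$ is a singleton in policy space, not in $Q$-space, because the softmax is invariant to adding a state-specific constant to $Q(x,\cdot)$. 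So "a singleton chain-transitive set is an equilibrium" does not apply directly. To close this, observe that the limit set of $Q_t$ is a compact invariant subset of $\mathcal{Q}(\mu^*)$. On it the policy, and hence $u$, is frozen at its $\mu^*$ value $u^*$, so orbits solve $\dot Q(x,m)=\bar\mu^*(x,m)\bigl(u^*(x,m)-Q(x,m)\bigr)$. The only bounded complete orbit of this linear system is the rest point $Q\equiv u^*$.
\end{enumerate}
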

\begin{proof}
    
For point (1), a standard stochastic approximation argument gives \eqref{eq: ODE} as the limiting ODE describing long-run $Q_t$ when following \eqref{eq: Qlearnprecise}. See e.g. \cite[Chapter 7]{borkar2008stochastic}, which accounts for the asynchronous nature of $Q$-learning by introducing the scaling factors $\bar \mu(x,m)$ in \eqref{eq: ODE}. The result then follows by a standard argument: see \cite[Theorem 7.3]{benaim2006dynamics}. For the partial converse point (2), by standard arguments as in \cite[Theorem 3.12]{benaim2012stochastic} it is sufficient to have (see \cite[Hypothesis 3.11]{benaim2012stochastic}) the presence of shocks $\eta$ to payoff realizations $u_t$ of the sender, as introduced in the model section. Intuitively, unstable equilibria have zero probability to be learnt as long as errors ($\eta$) in the updates of the algorithm persistently push it towards a repelling direction of the equilibrium. 

For point (3), consider the simplified ODE over $Q$-matrices we refer to as $W(x,m)$, to keep track of the fact that these are generated by different dynamics: 
\begin{align}\label{eq: ODEsimpl}
     \dot{W}(x,m) = \Big(u(x,m,W) - W(x,m) \Big).
\end{align}
Note that, as $\bar \mu(x,m)>0$ always, zeros of \eqref{eq: ODEsimpl} and \eqref{eq: ODE} coincide. Furthermore, $\dot W(x,m)$ induces the following dynamics on $\mu(x,m)$:
\begin{align*}
    \dot \mu(x,m,W) = \tau^{-1}\mu(x,m,W)\Big[\dot W(x,m) - \sum_{m'}\mu(x,m',W)\dot W(x,m') \Big],
\end{align*}
as can be derived from the definition of $\mu$. Furthermore, as
\begin{align*}
    \ln(\mu(x,m,W)) = \tau^{-1}W(x,m) - \ln\Bigg(\sum_{m'}\exp\Big(\tau^{-1}W(x,m') \Big) \Bigg) = \tau^{-1}W(x,m) - \mathbf{L}(W),
\end{align*}
we get
\begin{align*}
   \tau\dot \mu(x,m,W)  &= \mu(x,m,W)\Big(u(x,m,W) - \sum_{m'}\mu(x,m',W) u(x,m',W) \\
   &- \tau \Big[\ln\big(\mu(x,m,W)\big) - \sum_{m'}\mu(x,m',W)\ln\big(\mu(x,m',W)\big)  \Big]\Big).
\end{align*}
We can drop $\tau$ on the left hand side as it just re-scales time. As $u(x,m,W)$ depends on $X$ only through $\mu$, we can more concisely write
\begin{align}
\begin{split}
 \dot \mu(x,m) &= \mu(x,m)\Big(u(x,m,\mu) - \sum_{m'}\mu(x,m')u(x,m',\mu)\\
 &- \tau \Big[\ln\big(\mu(x,m)\big) - \sum_{m'}\mu(x,m') \ln\big(\mu(x,m')\big)  \Big]\Big)\\
 &= \mu(x,m)\left(u^H(x,m,\mu,\tau) - \sum_{m'}\mu(x,m')u^H(x,m',\mu,\tau)\right),\label{eq: piODE}
 \end{split}
\end{align}
where 
\begin{align*}
        u^H(x,m,\mu,\tau) = u(x,m,W) - \tau \log(\mu(x,m,W)),
    \end{align*}
    are perturbed payoffs.
As $\bar \mu>0$ always, rest points of \eqref{eq: ODEsimpl} (and therefore of \eqref{eq: ODE}) are rest points of \eqref{eq: piODE}. From \eqref{eq: piODE}, it follows that rest points of \eqref{eq: ODE} are Nash equilibria in the perturbed game with payoffs $u^H$.

Let $\mathbf{NE}_{\eps}(\ut)$ be the set of (sender policies $\mu$ in) such Nash equilibria. We need to show that the Nash equilibrium correspondence $\mathbf{NE}_{\varepsilon}(\ut)$ is upper hemicontinuous at $\varepsilon,\ut =0$. Take an equilibrium $\mu_{\varepsilon,\ut} \in \mathbf{NE}_{\eps}(\ut)$. By contradiction, suppose there exists a sequence $(\eps_t,\ut_t)_{t=1}^{\infty}$ s.t. $\mu_{\eps_t,\ut_t}$ are equilibria for every $t$, but $\lim_{t\rightarrow \infty}\mu_{\eps_t,\ut_t}\notin \mathbf{NE}_0(0)=\mathbf{NE}_{CS}$. Equivalently, there exists $x,m,m'$ such that 
\begin{align*}
    U^S(x,m,\mu_{0,0}) < U^S(x,m',\mu_{0,0}),
\end{align*}
but $\mu_{0,0}(x,m')=0$, as would happen for non-argmax messages given $\ut=0$. By continuity of $\mu$ in $\eps, \ut$, and continuity of sender payoffs $U^S$ in $\mu$, we must have that this strict inequality is preserved also for $\eps>0, \ut>0$ small enough. But then we have a contradiction: for such $\eps, \ut$, $m'$ would be the argmax message given $x$, but isn't sent with maximal probability under $\mu_{\eps,\ut}$. Thus, there exists $t $ large enough so that $\mu_{\eps_t,\ut_t} \notin \mathbf{NE}_{\eps_t}(\ut_t)$, which contradicts the setup. The result follows. 
\end{proof}
Point (3) of the above Proposition implies the conclusion of Theorem \ref{thm: Nashconv}: $\mathbf{L}(\mu_0)$ being a single point implies convergence, and convergence can only happen if a rest point of \eqref{eq: ODE} is found. Then apply point (3). \hfill $\square$

\subsection{Proof of Theorem \ref{thm: attrchar}}\label{app: attrcharproof}
The proof involves multiple steps. 

Given the conclusion of Proposition \ref{prop: stochapproxQ}, we can move to showing that MSFR and SAPS are characterising features of attracting Nash equilibria under \eqref{eq: ODE}. First, a few definitions. 

For any state $x\in \mathbf{X}_K$, and fixing a policy $\mu$, let $G(x)$ be the set of messages that maximize $\mu(x,\cdot)$. Now note that 
\begin{align}\label{eq: pi deriv}
    \frac{\partial \mu(x',m)}{\partial Q(x',m')} = \begin{cases}
             \ut^{-1}\mu(x',m)\big(1 - \mu(x',m) \big)  &\text{ if } m' = m\\
            -\ut^{-1}\mu(x',m)\mu(x',m') & \text{ o.w. }
        \end{cases},
\end{align}
which follows from the definition of the softmax. 


\textbf{If}: For the `if' part of the characterisation of attracting equilibria, it will be helpful to note that for $\eps,\ut $ small enough, $U(Q)$, the vector of $u(x,m,Q)$ over $x,m$  is a local contraction around $Q^*(\mu)$, the expected payoffs of the equilibrium $\mu$. To see this, consider
\begin{align*}
         \frac{\partial u(x,m,Q)}{\partial Q(x',m')} &= 2\big(x - y(m,Q) \big)(1-\eps)\frac{\partial \mu(x',m)}{\partial Q(x',m')}\Bigg[ \frac{\Big(x' - \mathbb{E}[x\mid m,\mu]\Big)}{(1-\eps)\mathbb{P}[m \mid \mu] +\eps}  \\
         &-  \frac{\eps \Big(\frac{1}{2} - \mathbb{E}[x \mid m, \mu] \Big)}{((1-\eps)\mathbb{P}[m \mid \mu]+ \eps)^2} \Bigg]\\
         &= \frac{\partial \mu(x',m)}{\partial Q(x',m')}A(x,x',m).
    \end{align*}
    From \eqref{eq: pi deriv} we see that $\frac{\partial \mu(x',m)}{\partial Q(x',m')} = o\big(\exp(-\ut^{-1})\big)$ whenever $|G(x')|=1$, i.e. for states $x'$ that send a single message with high probability. Furthermore, note that when the middle state $x=\frac{1}{2}$ is fully revealing, no matter whether they mix over messages or not, $\frac{\partial u(x,m,Q)}{\partial Q(x',m')}=0$ for all $m,x',m'$. Now consider a $\mu \in \mathbf{NE}_{CS}$ (by upper hemicontinuity as shown in the proof of Theorem \ref{thm: Nashconv}, we can consider $\mathbf{NE}_{CS}$ given $\ut, \eps$ small enough)  that satisfies MSFR and SAPS. Let $\omega(x)$ be the (set of) messages with highest $\mu(x,m)$ given $\mu$.  Define the optimality gap as
    \begin{align*}
        \Delta = \min_{x\in \mathbf{X}_K}\left( u(x,\omega(x),Q^*) - \max_{m' \notin \omega(x)}u(x,m',Q^*) \right),
    \end{align*}
    where with some abuse of notation $\omega(x)$ selects an arbitrary element of the set as payoffs are equal on the set. For some $c\in (0,1)$, let
    \begin{align*}
         N_c = \left\{Q: \| Q - U(Q^*)\|_{\infty}< c\Delta \right\}.
    \end{align*}
    as the set of all $Q$ that are close enough to $Q^*$ so that for all $x'\neq 1/2$, optimal messages will match with $Q^*$, and for $x=1/2$ the set of messages sent (almost) only by $1/2$ equals $\omega(1/2)$, for $\ut$ small enough. We will show that $U(Q)$ is a contraction on $N_c$. Clearly, from the above we have that $\left| \frac{\partial u(x,m,Q)}{\partial Q(x',m')}\right| = o\big(\exp(-\ut^{-1})\big)$ for all $x\neq 1/2$. For $x = 1/2$, we have that $\mu(x,m,Q)$ is not generically constant for all $m\in \omega(x)$. However, for all $\ut>0$ we can find $c>0$ small enough so that
    \begin{align*}
        \left|\frac{\partial u(1/2,m,Q)}{\partial Q(x',m')}\right| <1,
    \end{align*}
    for all $m, m' \in \omega(x)$. Note that $c>0$ controls the size of $x-y(m,\mu(Q))$, and hence $A(1/2,x',m,Q)$: the smaller $c>0$, the more likely it is that almost only $x$ sent message $m$, and therefore the posterior given $m$ approaches $x$. Given fixed  $\ut$,  fix such $c>0$ small enough and consider the neighborhood $N_c$.

     Let $0<\beta<1$ be the parameters such that whenever $Q,Q' \in N_c$,
    \begin{align*}
        \|U(Q) - U(Q')\|\le \beta\|Q-Q'\|.
    \end{align*}
    
    Now we can follow the approach taken in \cite[Chapters 7.4 and 10.3]{borkar2008stochastic}. First, we can write \eqref{eq: ODE} equivalently as
    \begin{align}\label{eq: odeinclu}
        \dot Q \in \Omega_{\eps}\cdot \Big[U(Q) - Q \Big],
    \end{align}
    where $\cdot$ indicates element-wise multiplication, $\Omega_{\eps}= [\eps,1]^{K^2}$, and $U(Q)$ is the vector of $u(x,m,Q)$ over $x,m$. Thus, \eqref{eq: odeinclu} re-interprets \eqref{eq: ODE} as a differential inclusion with varying scaling parameter vectors $\omega \in \Omega_{\eps}$. Re-write this further as
    \begin{align*}
        \Omega_{\eps}\cdot \Big[U(Q) - Q \Big] = \overline{U}(Q) - Q,
    \end{align*}
    where $\overline{U}(Q) = (I_{K^2}-\Omega_{\eps})\cdot Q + \Omega_{\eps}\cdot U(Q)$, $I_{K^2}$ being the identity matrix of dimension $K^2$. It follows that for any  $Q,Q' \in  N_c$, 
    \begin{align*}
        \|\overline{U}(Q) - \overline{U}(Q')\|\le \overline{\beta}\|Q-Q'\|,
    \end{align*}
where $\overline{\beta} = 1-\eps(1-\beta) \in (0,1)$. Thus, $\overline{U}(Q)$ is a contraction whenever $U(Q)$ is. To finish the proof, we consider a Lyapunov function as in \cite[Chapter 10.3]{borkar2008stochastic}. Define, for any positive weights $w(x,m)>0$, for all $x,m$, the norm
\begin{align*}
    \|Q\|_{w,p} = \left( \sum_{x,m}w(x,m)|Q(x,m)|^p\right)^{\frac{1}{p}},
\end{align*}
with $1\le p < \infty$.  For any initial value $Q_0$,  $V(t) = \|Q(t) - Q^*\|_{w,p}$, $t\ge 0$. Then, using an argument analogous to the proof of Theorem 2 in \cite[Chapter 10.3]{borkar2008stochastic}, we obtain
\begin{align*}
    \frac{d}{dt}V(t) \le \|\overline{U}(Q(t)) - \overline{U}(Q^*)\|_{w,p} - \|Q(t) - Q^*\|_{w,p} \le (\bar \beta -1)\|Q(t) - Q^*\|_{w,p}<0,
\end{align*}
whenever $Q(t)\in N_c$. The result follows: $V(t)$ must decrease to zero whenever $Q(t) \in N_c$, and the equilibrium is approached. Hence, any $Q^*$ equilibrium payoff associated with a $\mu \in \mathbf{NE}_{CS}$ satisfying MSFR and SAPS will be an attractor under \eqref{eq: ODE}, and therefore learnt with positive probability, as required by our `if' part. 

\textbf{Only if:} For the `only if' part, we show that a $\mu \in \mathbf{NE}_{CS}$ that does not sastify MSFR or SAPS will either not be feasible as equilibrium to be learnt under the softmax, or linearly unstable, and therefore learnt with zero probability. We will proceed via multiple Lemmas. First, consider the case of an equilibrium policy $\mu$ under which some states fully reveal by mixing over multiple messages. 

To check stability of any equilibrium $\mu$, we linearize \eqref{eq: ODE} at $\mu$. $Q^*$ be the $Q$-values associated with $\mu$. Since $Q$ is a $K\times K$ matrix, we can re-write the system as a vector system $\tilde{Q} \in \mathbb{R}^{K^2}$. The linearization is then the Jacobian $J(\tilde{Q^*})\in \mathbb{R}^{K^2\times K^2}$, which for short we write as $J$. We normalize the linearized system so that $Q^*$ lies at the origin. 

We prove a more general result that will be useful later on: any equilibrium policy $\mu$ that involves mixtures over messages for pools other than the singular pool $P=\{1/2\}$ must be unstable, and therefore will converged to with probability zero.

\begin{lemma}\label{lem: nomixpools}
    Consider $\mu \in \mathbf{NE}_{CS}$ with a set of states $P\subseteq X_K$ with $|P|=L\ge1$, s.t. there are messages $m_1,m_2 \in G(P)$. Then whenever $P\neq \{1/2\}$, for all $\ut, \varepsilon$ small enough, $\mu$ is linearly unstable under \eqref{eq: ODE}.
\end{lemma}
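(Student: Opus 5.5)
We argue by exhibiting a direction in which the linearization of \eqref{eq: ODE} at the equilibrium has a positive eigenvalue. Instability then follows, and point (2) of Proposition \ref{prop: stochapproxQ} gives convergence with probability zero.

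\textbf{Step 1: reduce to a small block of the Jacobian.} Let $Q^*$ induce $\mu$. Let $P$ be the pool sharing the two maximal messages $m_1,m_2\in G(P)$. At $\mu$ the softmax assigns (nearly) equal mass $\mu(x,m_1)=\mu(x,m_2)\approx 1/|G(x)|$ to each $x\in P$. By \eqref{eq: pi deriv}, $\partial\mu(x',m)/\partial Q(x',m')$ is of order $\ut^{-1}$ only for $x'\in P$ and $m,m'\in\{m_1,m_2\}$ (or, more generally, within $G(P)$). All other such derivatives are $o(\exp(-\ut^{-1}))$. Hence, up to exponentially small terms, $J$ is block-triangular. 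The diagonal entries outside the block are $-\bar\mu(x,m)<0$. So it suffices to find a positive eigenvalue of the block indexed by $\{(x,m_i):x\in P,\ i=1,2\}$.

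\textbf{Step 2: the antisymmetric perturbation.} Consider the perturbation $\delta Q(x,m_1)=+h$, $\delta Q(x,m_2)=-h$ for all $x\in P$. It shifts mass $\approx c\,\ut^{-1}h$ from $m_2$ to $m_1$ uniformly across the pool, where $c=\mu(x,m_1)\mu(x,m_2)$ up to the $(1-\eps)$ factor. This changes the posterior $y(m_i)$ only through the relative weight of exploration mass $\eps/K$, which is spread uniformly with mean $1/2$.

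As $m_1$ becomes more used, $y(m_1)$ moves toward the pool mean $\bar x_P=\mathbb{E}[X\mid X\in P]$. Meanwhile $y(m_2)$ moves toward $1/2$. Differentiating $y(m)=\frac{(1-\eps)\sum_{x}\mu(x,m)x/K+\eps/(2K)}{(1-\eps)\sum_x\mu(x,m)/K+\eps/K}$ gives $\partial y(m_1)/\partial h\propto \ut^{-1}\eps(\bar x_P-1/2)$ to leading order, with the opposite sign for $m_2$.

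The payoff response is $\dot{\delta Q}(x,m_1)-\dot{\delta Q}(x,m_2)\approx \bar\mu\big[2(x-y)\,(\partial_h y(m_1)-\partial_h y(m_2))-2h\big]$, where $y\approx y(m_1)\approx y(m_2)$. Averaging over $x\in P$ with equal weights kills the term $x-\bar x_P$. What remains is the sign of $(\bar x_P-y)(\bar x_P-1/2)$. Since $y$ lies strictly between $\bar x_P$ and $1/2$ because of exploration, we have $y-\bar x_P=\kappa\eps(1/2-\bar x_P)$ with $\kappa>0$. This yields a gain of order $\ut^{-1}\eps^2(\bar x_P-1/2)^2$ against the $-\bar\mu$ decay term.

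\textbf{Step 3: the eigenvalue and the main obstacle.} The resulting scalar coefficient is $\bar\mu\big(C\ut^{-1}\eps^2(\bar x_P-1/2)^2-1\big)$ with $C>0$. By Assumption (Q)(2), $\ut/\eps^2\to0$, so this coefficient is positive for $\ut,\eps$ small whenever $\bar x_P\neq1/2$. That gives a positive eigenvalue, via a Rayleigh-quotient argument on the block, since the antisymmetric vector is nearly invariant.

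The main obstacle is the case $\bar x_P=1/2$ with $P\neq\{1/2\}$, e.g. a symmetric pool around the middle. There the leading term vanishes, and I would expand to next order. The posterior error $y-\bar x_P$ then vanishes at order $\eps$. However, the non-averaged terms $(x-\bar x_P)$ interact with the pool-dependent weights $\mu(x,m_i)$: perturbing $Q(x,m_1)$ only for $x>1/2$ and $Q(x,m_2)$ only for $x<1/2$ separates the two messages. Each sub-pool then gets a posterior closer to its own mean, a first-order gain of size $\ut^{-1}\mathrm{Var}(X\mid P)>0$ that dominates the decay. This is exactly the "splitting" direction that fails only when $P=\{1/2\}$, where the variance is zero, and it explains the exception in the statement.

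I expect the careful bookkeeping of this second case, and controlling the exponentially small off-block couplings, to be the hardest part.
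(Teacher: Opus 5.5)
\textbf{Gap: the step from your test vectors to an eigenvalue fails for non-singleton pools.}

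Your reduction to the block $P\times G(P)$ is the paper's, and so is the mechanism: a $\ut^{-1}$-amplified payoff gain beating the $-\bar\mu$ decay. For a singleton pool $P=\{x\}$ with $x\neq 1/2$, your uniform antisymmetric vector is an exact eigenvector and coincides with the paper's. Steps 2--3 are therefore fine there, including the use of Assumption (Q)(2).

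The problem arises when $|P|=L\ge 2$. On the block,
\[
\partial u(x,m)/\partial Q(x',m')=c\,(x-\bar E)(x'-\bar E)\big(\delta_{mm'}-\tfrac1N\big),
\]
with $c=\tfrac{2(1-\eps)}{ND\ut}$ and $D=(1-\eps)\tfrac LN+\eps$; the bracket in the paper's $A(x,x',m)$ simplifies to $(x'-\bar E)/D$. Take $v(x,m)=w_m$ on the block, with $\sum_m w_m=0$, and let $E$ be the pool mean. Then
\[
(Jv)(x,m)=\bar\mu\,w_m\Big[\big(cL(E-\bar E)^2-1\big)+cL(E-\bar E)(x-E)\Big].
\]
For $E\neq 1/2$, which is where you use this vector, the non-uniform part exceeds the part you keep by a factor of order $|x-E|/|E-\bar E|\sim 1/\eps$. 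So $v$ is not nearly invariant. Averaging over $x\in P$ only computes a Rayleigh quotient, and for a general non-normal matrix a positive Rayleigh quotient says nothing about the spectrum. Your step-function splitting vector for $\bar x_P=1/2$ has the same problem.

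\textbf{Two repairs.}

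First, you can observe that the block equals $\bar\mu(A-I)$ with $A$ symmetric, which holds here because $b=0$. Symmetry makes the Rayleigh bound legitimate, but you must state it rather than appeal to near-invariance.

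Second, you can follow the paper and use the exact eigenvector $\beta(x,m)=w_m(x-\bar E)$. This is the paper's choice $\beta(x,m_1)\propto x-\bar E$ with $\beta(x,m_1)=-(N-1)\beta(x,m_j)$. Its eigenvalue is $\bar\mu\big(c\sum_{x\in P}(x-\bar E)^2-1\big)$. Since
\[
\sum_{x\in P}(x-\bar E)^2=L\,\mathrm{Var}(X\mid P)+L(E-\bar E)^2,
\]
this is exactly your two gains added together. It is positive for small $\ut$ unless $P=\{1/2\}$, and (Q)(2) is needed only when $L=1$.

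This also shows your case split runs along the wrong line. The relevant dichotomy is $L=1$ versus $L\ge 2$. For any non-singleton pool the splitting mode dominates at rate of order $\ut^{-1}\mathrm{Var}(X\mid P)$, whatever $\bar x_P$ is, so no second-order expansion is needed.

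Finally, the exact-eigenvector route does not rely on symmetry. This matters because the lemma is reused in Theorem \ref{thm: Posbiasnolearn}. There $b>0$, and $w_m(x+b-\bar E)$ is still an eigenvector, whereas a Rayleigh argument is no longer available.
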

\begin{proof}
Note that as $\mu$ is an equilibrium, by the nature of the softmax policy it must be that all states in $P$ send the same set of messages, and all messages in that set with equal probability. Otherwise, there would be some message $m\neq m_1, m_2$ sent by one state $x\in P$ that is not sent by some other state $x'\in P$. As mixing in equilibrium only happens when states are indifferent over messages, and both $x,x'$ are sending $m_1,m_2$ with positive probability, $x,x'$ must also be indifferent over $m$, and so must put positive mass on $m$ by construction of the softmax. Now, let $N>1$ be the number of messages sent by each $x_i \in P$. Let $L=|P|\ge1$. Let $E = \mathbb{E}[x\mid m_j,\mu]$, and $\overline{E} = y(m_j,Q)$ for $m_j\in G(P)$. Define $O(m)$ as the set of states $x$ for which $m\in G(x)$. Finally, write $C =K\mathbb{P}(m_j|\mu) = \sum_{i: x_i \in O(m_j)}\frac{1}{N}=\frac{L}{N}$ as the probability that $m_j\in G(P)$ is sent under $\mu$, scaled by $K$. Write $\nu = \exp(-\ut^{-1})$.  For $m \in G(P)$ define
\begin{align*}
    A(x,x',m) = 2\big(x - \overline{E} \big)(1-\eps)\Bigg[ \frac{\Big(x' - E\Big)}{(1-\eps)C +\eps}  -  \frac{\eps \Big(\frac{1}{2} - E\Big)}{((1-\eps)C+ \eps)^2} \Bigg] + o(\nu),
    \end{align*}
     Then:
    \begin{align*}
        \frac{\partial u(x,m,Q)}{\partial Q(x_i,m')} &= \begin{cases}
            \ut^{-1}\frac{N-1}{N^2}A(x,x_i,m) + o(\nu) & \text{ if } m' = m \in G(x_i) \\
            -\ut^{-1}\frac{1}{N^2}A(x,x_i,m) + o(\nu) & \text{ if } m' \neq m \text{ and }m,m' \in G(x_i)\\
            o(\nu) & \text{ o.w.}
        \end{cases}
    \end{align*}
     To construct an eigenvector, write $\beta(x,m)$ as the entry associated with $Q(x,m)$. Suppose $m_j \in G(P)$. Then
    \begin{align*}
        (J\beta)[x_i,m_j] &= &\mu(x_i,m_j)\left[\sum_{k=1}^K\sum_{k'=1}^K\beta(x_k,m_{k'})\frac{\partial u(x_i,m_j,Q)}{\partial Q(x_k,m_{k'})} - \beta(x_i,m_j)\right]\\
        &=& \mu(x_i,m_j)\Bigg[R(x_i,m_j)\sum_{x_j \in P}\Bigg[ \frac{\Big(x_j - E\Big)}{(1-\eps)C +\eps}  -  \frac{\eps \Big(\frac{1}{2} - E\Big)}{((1-\eps)C+ \eps)^2} \Bigg]\\
        && \Big[(N-1)\beta(x_j,m_j) - \sum_{m_{k'} \in G(P), k' \neq j}\beta(x_j,m_{k'})   \Big] - \beta(x_i,m_j)\Bigg] + o(\nu)
    \end{align*}
    where $R(x_i,m_j) = \ut^{-1}2\big(x_i - \overline{E}\big)(1-\eps)\frac{1}{N^2}$.  
    
 Without loss, let $\{m_1,\ldots,m_{N}\} = G(P)$. We consider a perturbation to Q values of these messages, for all states. Consider the following vector:
\begin{align*}
    \beta(x,m_1) &= - (N-1)\beta(x,m_2) & \forall x&\\
    \beta(x,m_2) &= \beta(x,m_j) &  2 \le j& \le N\\
    \beta(x,m_1)  &=  \frac{x - \bar{E}}{x' - \bar{E}}& \forall\; &x,x'\in P &\\
    \beta(x,m_j) &=0 & \text{ if } j> &N.
\end{align*}
We will proceed by showing that $\lambda\neq 0$ is an approximate eigenvalue of $J$. For this, consider $\tilde J$, the operator that differs from $J$ only by setting all $o(\nu) $ parts of $J$ to zero. First, note that if $\beta$ are such that there exists an eigenvalue $\lambda \neq 0$, then for all $m_j,m_{j'}\in G(P)$ and $x_i$ s.t. $\beta(x_i,m_j), \beta(x_i,m_{j'}) \neq 0$, we have:
\begin{align}\label{eq: posb_betratio}
\begin{split}
 \frac{\left(\mu(x_i,m_j)+\lambda\right)\beta(x_i,m_j)}{\left(\mu(x_i,m_{j'})+\lambda\right)\beta(x_i,m_{j'})} \\
=    \frac{\sum_{x_j \in P}\Bigg[ \frac{\Big(x_j - E\Big)}{(1-\eps)C +\eps}  -  \frac{\eps \Big(\frac{1}{2} - E\Big)}{((1-\eps)C+ \eps)^2} \Bigg]\Big[(N-1)\beta(x_j,m_j) - \sum_{m_{k'} \in G(P), k' \neq j}\beta(x_j,m_{k'})   \Big]}{\sum_{x_j \in P}\Bigg[ \frac{\Big(x_j - E\Big)}{(1-\eps)C +\eps}  -  \frac{\eps \Big(\frac{1}{2} - E\Big)}{((1-\eps)C+ \eps)^2} \Bigg]\Big[(N-1)\beta(x_j,m_{j'}) - \sum_{m_{k'} \in G(P), k' \neq j'}\beta(x_j,m_{k'})   \Big]} 
\end{split}
\end{align}
To check the properties of $\beta$ are consistent, we first consider the first property of the vector: 
 \begin{align*}
     \frac{\beta(x_i,m_1)}{\beta(x_i,m_{2})}&=\\
     \frac{\sum_{x_j \in P}\Bigg[ \frac{\Big(x_j - E\Big)}{(1-\eps)C +\eps}  -  \frac{\eps \Big(\frac{1}{2} - E\Big)}{((1-\eps)C+ \eps)^2} \Bigg]\Big[(N-1)\left(\beta(x_j,m_1) -\beta(x_j,m_2)\right)   \Big]}{\sum_{x_j \in P}\Bigg[ \frac{\Big(x_j - E\Big)}{(1-\eps)C +\eps}  -  \frac{\eps \Big(\frac{1}{2} - E\Big)}{((1-\eps)C+ \eps)^2} \Bigg]\Big[\beta(x_j,m_2) - \beta(x_j,m_1)   \Big]}\\
     &= -(N-1)
 \end{align*}
where we first use that for this policy, every state $x'$ sends all messages $m_j$, $1\le j\le N$, with  equal probability. Second we plug in Similarly one can verify that this is consistent with having $\beta(x,m_2) = \beta(x,m_j)$ for $2\le j\le N$.

 Note that when $x_i,x_{i'} \in P$, we have that $\frac{\mu(x_i,m_1)}{\mu(x_{i'},m_1)}=1$. Hence, for such pairs of states,
\begin{align*}
\frac{\beta(x_i,m_1)}{\beta(x_{i'},m_1)} &= \frac{x_i - \bar{E}_1}{x_{i'} - \bar{E}_1}   \frac{\sum_{x_j \in P}\Bigg[ \frac{\Big(x_j - E\Big)}{(1-\eps)C +\eps}  -  \frac{\eps \Big(\frac{1}{2} - E\Big)}{((1-\eps)C+ \eps)^2} \Bigg]\beta(x_j,m_1)}{\sum_{x_j \in P}\Bigg[ \frac{\Big(x_j - E\Big)}{(1-\eps)C +\eps}  -  \frac{\eps \Big(\frac{1}{2} - E\Big)}{((1-\eps)C+ \eps)^2} \Bigg]\beta(x_j,m_1)  } \\
& = \frac{x_i - \bar{E}}{x_{i'} - \bar{E}}, 
\end{align*}
as required. We can write
\begin{align*}
 &\sum_{x_j \in P}\Bigg[ \frac{\Big(x_j - E\Big)}{(1-\eps)C +\eps}  -  \frac{\eps \Big(\frac{1}{2} - E\Big)}{((1-\eps)C+ \eps)^2} \Bigg]\beta(x_j,m_1) \\
 &= \beta(x_1,m_1) \sum_{x_j \in P}\Bigg[ \frac{\Big(x_j - E\Big)}{(1-\eps)C +\eps}  -  \frac{\eps \Big(\frac{1}{2} - E\Big)}{((1-\eps)C+ \eps)^2} \Bigg]\frac{x_j - \bar{E}}{x_1 - \bar{E}}.
\end{align*}
To find $\lambda$, let $\beta(x_1,m_1) = 1$. Then, 
\begin{align*}
    1 + \frac{\lambda}{\mu(x_1,m_1)} &= NR(x_1,m_1)\sum_{x_j \in P}\Bigg[ \frac{\Big(x_j - E\Big)}{(1-\eps)C +\eps}  -  \frac{\eps \Big(\frac{1}{2} - E\Big)}{((1-\eps)C+ \eps)^2} \Bigg]\beta(x_j,m_1)\\
    &= NR(x_1,m_1)\sum_{x_j \in P}\Bigg[ \frac{\Big(x_j - E\Big)}{(1-\eps)C +\eps}  -  \frac{\eps \Big(\frac{1}{2} - E\Big)}{((1-\eps)C+ \eps)^2} \Bigg]\frac{x_j - \bar{E}}{x_1 - \bar{E}}\\
    &= \ut^{-1}2(1-\eps)\frac{1}{N}\sum_{x_j \in P}\Bigg[ \frac{\Big(x_j - E\Big)}{(1-\eps)C +\eps}  -  \frac{\eps \Big(\frac{1}{2} - E\Big)}{((1-\eps)C+ \eps)^2} \Bigg]\big( x_j - \bar{E}\big).
\end{align*}
For instability, we need
\begin{align*}
    \lambda &> 0\\
    \frac{1}{N}\sum_{j =1}^{N}\Bigg[ \frac{\Big(x_j - E\Big)}{(1-\eps)C +\eps}  -  \frac{\eps \Big(\frac{1}{2} - E\Big)}{((1-\eps)C+ \eps)^2} \Bigg]\big( x_j - \bar{E}\big) &> \frac{\ut }{2(1-\eps)}.
\end{align*}
Note that for $\eps$ small enough, the LHS is strictly positive for all possible pools $P$ except if $P=\{1/2\}$. In that case, for all $\ut>0$, this eigenvalue is negative, and we cannot conclude instability. When $P\neq \{1/2\}$, the left hand side approaches the conditional variance of $x$ given $m_1$. The RHS is $o(\ut )$, and thus the result follows.
\end{proof}

Hence, any equilibrium policy involving mixtures that are not only happening given state $1/2$ must be unstable, and cannot be learnt. 

The following Lemma verifies that unused the presence of unused messages will lead to a posterior arbitrarily close to the prior $1/2$, as $\ut$ shrinks. For a given policy $\mu$, define $\mathbf{M}_0\subset \mathbf{M}$ as the set of `unused' messages under $\mu$, i.e. the messages not sent as a maximizer of $\mu$ for any state. 
\begin{lemma}\label{lem: unusedM}
    Take $\mu$ such that $\mathbf{M}_0\neq \emptyset$. For all $m\in \mathbf{M}_0$, $y(m,\bar\mu) = \frac{1}{2} + o\left(\nu/\eps\right).$
\end{lemma}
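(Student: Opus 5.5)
The plan is to compute the posterior $y(m,\bar\mu)=\mathbb{E}[X\mid m,\bar\mu]$ directly from Bayes' rule under the full policy $\bar\mu=(1-\eps)\mu+\eps/K$. Then we split the joint probability of $(x,m)$ into an exploration part, which is uniform in $x$, and a softmax part, which is exponentially small for every unused message.

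Fix $m\in\mathbf{M}_0$ and write
\begin{align*}
y(m,\bar\mu)=\frac{\sum_{x}x\left[(1-\eps)\mu(x,m)+\eps/K\right]}{\sum_{x}\left[(1-\eps)\mu(x,m)+\eps/K\right]}.
\end{align*}
The first step is to bound $\mu(x,m)$ uniformly in $x$. Since $m\notin G(x)$ for every $x$, in each state there is a maximizing message $m'$ with $Q(x,m')-Q(x,m)\ge \delta>0$. Here $\delta$ is the minimal gap between maximal and non-maximal $Q$-values at the policy under consideration, which is bounded away from zero (e.g.\ by the optimality gap $\Delta$ used in the `if' part, after normalizing payoff scales). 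The softmax formula then gives
\begin{align*}
\mu(x,m)\le \frac{\exp(\ut^{-1}Q(x,m))}{\exp(\ut^{-1}Q(x,m'))}\le \exp(-\delta\ut^{-1}).
\end{align*}
After rescaling $Q$ (or $\ut$) as the paper does when writing $\nu=\exp(-\ut^{-1})$, this is $O(\nu)$ uniformly over $x$.

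The second step is to expand the ratio. Set $a_x=(1-\eps)\mu(x,m)=O(\nu)$. The denominator is then $\eps+\sum_x a_x=\eps(1+O(\nu/\eps))$, because the uniform terms sum to $\eps$. The numerator is $\eps\cdot\frac{1}{K}\sum_x x+\sum_x x\,a_x=\eps\left(\tfrac12+O(\nu/\eps)\right)$, using the symmetry of the grid and $x\in[0,1]$. Dividing gives $y(m,\bar\mu)=\tfrac12+O(\nu/\eps)$. Equivalently, one can write
\begin{align*}
y-\tfrac12=\frac{\sum_x (x-\tfrac12)a_x}{\eps+\sum_x a_x},
\end{align*}
which is bounded in absolute value by $\tfrac12\sum_x a_x/\eps\le K\nu/(2\eps)$. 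Since $K$ is fixed, this is the claimed order. To obtain little-$o$ rather than big-$O$, we can use a slightly smaller exponent or absorb constants, in the same convention the paper uses in Lemma~\ref{lem: nomixpools}.

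The main obstacle is the first step: we need a gap $\delta$ between the $Q$-value of an unused message and the maximal one that does not shrink with $\ut$. Otherwise $\exp(-\delta/\ut)$ need not be small relative to $\eps$. I would handle this by working, as in the rest of the appendix, at $Q$ near an equilibrium payoff vector $Q^*$ with strictly positive optimality gap. I would also invoke Assumption (Q)(2), namely that $\ut/\eps^2\to 0$, to guarantee that $\nu/\eps\to0$ as $\ut,\eps$ vanish, so the error term is indeed negligible.
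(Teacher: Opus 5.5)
Your proposal is correct and takes essentially the same route as the paper: write $\bar\mu(x,m)=\eps/K+(1-\eps)\mu(x,m)$, note that the softmax part is exponentially small for every unused message, and expand the Bayes ratio to get $\tfrac12$ plus an error of order $\nu/\eps$. Your explicit bound through a $\ut$-independent $Q$-gap $\delta$, together with Assumption (Q)(2) to ensure $\nu/\eps\to 0$, makes precise what the paper leaves implicit when it writes $\mu(x,m)=o(\nu)$ for non-maximizing messages.
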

\begin{proof}
For any $m\in \mathbf{M}_0$, $x$, we have $\bar \mu(x,m)= \frac{\eps}{K} + (1-\eps)\mu(x,m) = \frac{\eps}{K} + o(\nu)$. Then, 
    \begin{align*}
        y(m,\bar \mu) &= \mathbb{E}[x\mid m,\bar \mu] = \frac{\frac{1}{2}\frac{\eps}{K} + o(\nu) }{\frac{\eps}{K}  + o(\nu)}\\
        &= \frac{1}{2} + o(\nu/\eps).
    \end{align*}
\end{proof}

Hence, as long as $\nu/\eps$ vanishes, any unused message will get a receiver action arbitrarily close to $1/2$. Recalling that $\nu = \exp(-\ut^{-1})$, this is satisfied as long as $\eps$ vanishes slower than exponentially. 



This concludes that MSFR is necessary. Now for the necessity of SAPS. We show that if SAPS fails, the softmax cannot support such an equilibrium. When SAPS fails, we must have two neighboring pools $P,P'$ with $|P|=M,|P'|=M-2$, so that $m_1$ sent by states in $P$, $m_2$ by states in $P'$. Without loss, all states are smaller than $1\over 2$. Define $\bar x = \max_{x\in P}x$, and suppose $\min_{x\in P'}x>\bar x$.
\begin{lemma}\label{lem: SAPS}
    For all $\delta \in (0,\frac{1}{2})$, there exists $\eps>0,\ut>0$, small enough so that $\mu(\bar x,m_1)\le \delta$, for all $\mu\in \mathbf{NE}_{CS}$ featuring $P,P'$ as defined above.
\end{lemma}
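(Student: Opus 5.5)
The idea is to show that the boundary state $\bar x$ of the larger pool $P$ strictly prefers $m_2$ to $m_1$ once $\ut$ and $\eps$ are small. At any rest point of \eqref{eq: ODE}, the softmax then forces $\mu(\bar x,m_1)$ to be exponentially small in $\ut^{-1}$, and in particular below any fixed $\delta$.

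\emph{Step 1: the receiver's actions.} Both pools lie below $\tfrac12$, so I will first compute the actions after $m_1$ and $m_2$. By Lemma \ref{lem: nomixpools}, every pool other than $\{1/2\}$ sends a single message with probability $1-o(\nu)$, where $\nu=\exp(-\ut^{-1})$. The remaining mass comes from uniform exploration, which by Lemma \ref{lem: unusedM}-type reasoning contributes an $O(\eps)$ pull toward the prior mean $\tfrac12$. This gives
\begin{align*}
y(m_1)=\bar x-\tfrac{M-1}{2(K-1)}+O(\eps)+o(\nu/\eps),\qquad
y(m_2)=\bar x+\tfrac{1}{K-1}+\tfrac{M-3}{2(K-1)}+O(\eps)+o(\nu/\eps),
\end{align*}
where the second line uses that $P'$ starts at $\bar x+1/(K-1)$, by connectedness and adjacency of the pools. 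The bias is $b=0$ in this section, so the sender's payoff at $\bar x$ is $-(\bar x-y(m))^2$. The two distances are $\tfrac{M-1}{2(K-1)}$ and $\tfrac{M-1}{2(K-1)}$ up to the $O(\eps)$ terms. They are therefore tied at $\eps=0$ on the grid, so the comparison must be refined.

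\emph{Step 2: breaking the tie.} I expect this to be the main obstacle. If the exact comparison is a tie, I will use the exploration terms to break it. The smaller pool $P'$ has lower probability $C'=(M-2)/K<C=M/K$. Its posterior is therefore pulled toward $\tfrac12$ more strongly, by $\approx \eps(\tfrac12-E')/((1-\eps)C'+\eps)$, where $E'$ denotes the mean of $P'$. Since $\tfrac12-E'$ and $\tfrac12-E$ are comparable, this pull moves $y(m_2)$ upward, away from $\bar x$, by more than $m_1$'s pull moves $y(m_1)$ toward $\bar x$. That would make $m_1$ weakly better, which is the wrong direction.

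I will therefore recheck the pool geometry. If SAPS fails with sizes $M$ and $M-2$, the midpoint between the two pool means lies strictly above $\bar x+\tfrac{1}{2(K-1)}$. The first-order gap is then of order $1/(K-1)$ rather than a tie, and I expect the correct indexing to deliver
\[
|\bar x-y(m_2)|<|\bar x-y(m_1)|-\tfrac{c}{K-1}
\]
for some constant $c>0$. I would first verify this gap carefully against the definition of SAPS, since the off-by-one matters. I would then choose $\eps$ small enough that the $O(\eps)$ corrections, with $C,C'$ bounded below by $1/K$, are smaller than $c/(2(K-1))$. I would also use Assumption (Q)(2) to make $o(\nu/\eps)$ negligible.

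\emph{Step 3: concluding from the rest-point equation.} With a strict gap $\Delta'>0$ we have $u(\bar x,m_2)-u(\bar x,m_1)\ge\Delta'$. At a rest point, $Q=u$, so the softmax gives
\[
\mu(\bar x,m_1)\le \exp(-\Delta'/\ut)\,\mu(\bar x,m_2)\le\exp(-\Delta'/\ut).
\]
Taking $\ut<\Delta'/\ln(1/\delta)$ yields $\mu(\bar x,m_1)\le\delta$. Every $\mu\in\mathbf{NE}_{CS}$ with this pool structure is, by the upper hemicontinuity in Proposition \ref{prop: stochapproxQ}(3), approximated by such rest points, so the bound holds uniformly. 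This contradicts $\bar x\in P$ as a pool sending $m_1$ with maximal probability, which establishes the necessity of SAPS.
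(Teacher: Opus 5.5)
\textbf{There is a genuine gap: the strict preference Step 3 needs does not exist.} Your Step 1 computation is correct, and it is what sinks the route you take. With $\bar x$ sending $m_1$ purely, $\bar E_1=\bar x-\frac{M-1}{2(K-1)}$ and $\bar E_2=\bar x+\frac{1}{K-1}+\frac{M-3}{2(K-1)}=\bar x+\frac{M-1}{2(K-1)}$. So the midpoint of the two pool means is exactly $\bar x$. The claim in Step 2 that it lies strictly above $\bar x+\frac{1}{2(K-1)}$ is false, and no re-indexing will produce a first-order gap $c/(K-1)$. A size difference of exactly two is precisely the knife-edge case in which the boundary type is indifferent at $\eps=0$. (A difference of three or more already makes $m_2$ strictly better at $\bar x$, so such a profile is not in $\mathbf{NE}_{CS}$.) As you noticed, exploration tilts the comparison toward $m_1$, not $m_2$. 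Step 3 therefore has no $\Delta'>0$ to work with. Evaluating the posteriors at the pure pool structure and reading off a strict preference cannot prove the lemma. Note also that at an exact tie the softmax puts mass near $\frac12$ on each of $m_1,m_2$. Your premise that $\bar x$ sends $m_1$ with probability $1-o(\nu)$ therefore cannot be used to compute the posteriors.

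\textbf{The missing idea is to treat $\mu=\mu(\bar x,m_1)$ as endogenous in the posteriors; this is what drives the paper's proof.} With $\bar x$ split between the pools, $\bar x-\bar E_1=\frac{M-1}{2(K-1)}\frac{M}{M-1+\mu}$ and $\bar E_2-\bar x=\frac{M-1}{2(K-1)}\frac{M-2}{M-1-\mu}$. The first exceeds the second if and only if $\mu<1$. So, at $\eps=0$, indifference holds only at $\mu=1$. Any mass the softmax shifts to $m_2$ moves $\bar E_1$ away from $\bar x$ and $\bar E_2$ toward it, strictly favouring $m_2$. The paper then studies the scalar fixed-point equation $\mu=\bigl[1+\exp(\ut^{-1}\Delta(\mu))+\sum_{i\ge 3}\exp(\ut^{-1}\Delta_{i,2})\bigr]^{-1}$. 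Here $\Delta(\mu)=(\bar x-y(m_1))^2-(\bar x-y(m_2))^2$ is strictly decreasing in $\mu$ and at least some $c>0$ for $\mu$ bounded away from $1$. The paper shows every fixed point lies below $\frac12$ and concludes $\mu^*\le\exp(-\ut^{-1}\Delta(\mu^*))\to 0$.

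\textbf{Your $\eps$ observation matters for this argument too.} One has $\Delta(1)=-\Theta(\eps)$, so for fixed $\eps$ and $\ut\to 0$ there is a fixed point exponentially close to $\mu=1$. To close the argument you must choose $\eps$ small relative to $\ut$ (e.g.\ $\eps/\ut\to 0$), so that the right-hand side stays below $\mu$ near $\mu=1$. The lemma's existential quantifier permits this choice, but you should make it explicit. The paper's closing remark that the relation between $\eps$ and $\ut$ is unimportant glosses over this point. The required ordering is also the reverse of Assumption (Q)(2).
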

\begin{proof}
Define $E_i = \mathbb{E}[x\mid m_i,\mu]$ as the conditional mean of each pool, disregarding uniform exploration $\eps$. Note that by definition, $\bar E_1  =\bar x - \frac{M-1}{2(K-1)}\frac{M}{M-1+\mu}$, and $\bar E_2 = \bar x + \frac{M-1}{2(K-1)}\frac{M-2}{M-1-\mu}$, where $\mu=\mu(\bar x,m_1)$. Then, 
    \begin{align*}
        v_1=\Big(\bar x - y(m_1) \Big)^2 &= \Bigg(\frac{M-1 }{2(K-1)}\frac{M}{M-1+\mu} - \frac{\eps(\frac{1}{2}-\bar E_1)}{(1-\eps)(M-1+\mu) + \eps} \Bigg)^2,\\
        v_2=\Big(\bar x - y(m_2) \Big)^2 &= \Bigg(\frac{M-1 }{2(K-1)}\frac{M-2}{M-1-\mu} + \frac{\eps(\frac{1}{2}-\bar E_2)}{(1-\eps)(M-1-\mu) + \eps} \Bigg)^2.
    \end{align*}
    Then, note that 
    \begin{align}\label{eq: piimpl}
        \mu = \Bigg[1 + \exp\Big( \ut^{-1}\Delta(\mu)\Big) + \sum_{i=3}^K\exp\big(\ut^{-1}\Delta_{i,2}\big) \Bigg]^{-1},
    \end{align}
    where $\Delta(\mu)=v_1-v_2$, $\Delta_{i,2} = u(\bar x,m_i) - u(\bar x,m_2)$. As $\Delta_{i,2}<< 0 \le \Delta(\mu)$ for $i\ge 3$, the sum in \eqref{eq: piimpl} is negligible. However, $\mu$ is only implicitly pinned down. Note first that $\Delta(0)>0$ (for all $\varepsilon$ small enough), as in this case, $\bar x$ is part of $P'$, and the resulting posteriors are such that $m_2$ is preferred over $m_1$. Thus, at $\mu=0$, the RHS of \eqref{eq: piimpl} is positive. Next, note that the sender is only just indifferent between $P',P$ at state $\bar x$ if  $\bar x$ sends $P$s message with probability one. It is then not surprising, that
    \begin{align*}
        \frac{M}{M-1 + \mu}&>\frac{M-2}{M-1-\mu} \\
          M(M-1) - M\mu &> (M-2)(M-1) + (M-2)\mu\\
          1&>\mu,
    \end{align*}
    which implies that for all $c>0$ there exists $\bar \eps$ and $\delta_c$ s.t. for all $\varepsilon\le \bar \eps$, $\Delta(\mu)\ge c$ if and only if $\mu< 1-\delta_c$. $\delta_c$ vanishes as $c$ vanishes. In other words, as long as $\mu<1$, $\bar x$ would prefer to send $m_2$. This holds uniformly in $\ut$. As $\Delta>0$ implies that $RHS<\frac{1}{2}$, all fixed points of \eqref{eq: piimpl} lie strictly below $\frac{1}{2}$. It is then intuitive that any such fixed point must decrease towards zero as $\ut$ decreases, which we now verify. To ease notation, write $\partial_{\mu}, \partial^2_{\mu}$ as first and second derivatives w.r.t. $\mu$. 
    \begin{align*}
        \partial_{\mu}\bar E_1 &= \frac{M-1 }{2(K-1)}\frac{M}{(M-1+\mu)^2}\\
        \partial_{\mu}\bar E_2 &= \frac{M-1 }{2(K-1)}\frac{M-2}{(M-1-\mu)^2}
    \end{align*}
    \begin{align*}
        \partial_{\mu}\big(\bar x - y(m_1)\big) &= \frac{\eps(1-\eps)(\frac{1}{2}-\bar E_1)}{((1-\eps)(M-1+\mu) + \eps)^2}-\frac{M-1 }{2(K-1)}\frac{M}{(M-1+\mu)^2} \\
        &+\frac{\eps(\partial_{\mu}\bar E_1)}{(1-\eps)(M-1+\mu) + \eps}\\
        &= \frac{\eps(1-\eps)(\frac{1}{2}-\bar E_1)}{((1-\eps)(M-1+\mu) + \eps)^2}-\partial_{\mu}\bar E_1\frac{(1-\eps)(M-1+\mu)}{(1-\eps)(M-1+\mu) + \eps}\\
        &=\frac{(1-\eps)}{(1-\eps)(M-1+\mu) + \eps}\Bigg[\frac{\eps(\frac{1}{2}-\bar E_1)}{(1-\eps)(M-1+\mu) + \eps}  - \partial_{\mu}\bar E_1(M-1+\mu) \Bigg]\\
        &= \frac{(1-\eps)}{(1-\eps)(M-1+\mu) + \eps}\big( y(m_1)-\bar x\big),
    \end{align*}
    which is negative. Similarly, 
    \begin{align*}
        \partial_{\mu}\big(y(m_2)-\bar x\big) &= \partial_{\mu}\bar E_2\frac{(1-\eps)(M-1-\mu)}{(1-\eps)(M-1-\mu) + \eps} + \frac{\eps(1-\eps)(\frac{1}{2}-\bar E_2)}{((1-\eps)(M-1-\mu) + \eps)^2} \\
        &=\frac{(1-\eps)}{(1-\eps)(M-1-\mu) + \eps}\big( y(m_2)- \bar x\big),
    \end{align*}
    which is always positive, confirming the previous claims about $v_1,v_2$. It follows that
    \begin{align*}
        \partial_{\mu}\Delta(\mu) &= \partial_{\mu}v_1 - \partial_{\mu}v_2\\
        &= - 2(1-\eps)\Big( \frac{v_1}{(1-\eps)(M-1+\mu) + \eps} + \frac{v_2}{(1-\eps)(M-1-\mu) + \eps} \Big)  <0.
    \end{align*}
 Let $G(\mu)$ be the (approximate) RHS, discarding terms involving $\Delta_{i,2}$. Then, 
    \begin{align*}
        G'(\mu) = -G(\mu)^2\exp\left(\ut^{-1}\Delta \right)\ut^{-1}\partial_{\mu}\Delta = -G(\mu)\left(1-G(\mu) \right)\ut^{-1}\partial_{\mu}\Delta>0,
    \end{align*}
    as $\partial_{\mu}\Delta<0$. Next, note
    \begin{align*}
        \partial^2_{\mu}\big(\bar x - y(m_1)\big) &= \left(\frac{(1-\eps)}{(1-\eps)(M-1+\mu) + \eps} \right)^2\big(\bar x - y(m_1)\big) \\
        &- \frac{(1-\eps)}{(1-\eps)(M-1+\mu) + \eps}\partial_{\mu}\big(\bar x - y(m_1)\big)\\
        &= 2\left(\frac{(1-\eps)}{(1-\eps)(M-1+\mu) + \eps} \right)^2\big(\bar x - y(m_1)\big),
    \end{align*}
    and similarly,
     \begin{align*}
        \partial^2_{\mu}\big(y(m_2)-\bar x\big) &= 2\left(\frac{(1-\eps)}{(1-\eps)(M-1-\mu) + \eps} \right)^2\big(y(m_2)-\bar x\big).
    \end{align*}
    Putting things together, we can derive 
    \begin{align*}
        \partial^2_{\mu}v_1 &= \partial_{\mu}\left(2\big(\bar x - y(m_1)\big)\partial_{\mu}\big(\bar x - y(m_1)\big)
        \right)\\
        &= 6\left(\frac{(1-\eps)}{(1-\eps)(M-1+\mu) + \eps} \right)^2v_1,
    \end{align*}
    and similarly,
    \begin{align*}
        \partial^2_{\mu}v_2 &= 6\left(\frac{(1-\eps)}{(1-\eps)(M-1-\mu) + \eps} \right)^2v_2.
    \end{align*}
    Then,
    \begin{align*}
        \partial^2_{\mu}\Delta &= 6\left(\frac{(1-\eps)}{(1-\eps)(M-1+\mu) + \eps} \right)^2v_1- 6\left(\frac{(1-\eps)}{(1-\eps)(M-1-\mu) + \eps} \right)^2v_2\\
        &= 6(1-\eps)^2\left(\frac{v_1}{\left( (1-\eps)(M-1+\mu) + \eps\right)^2} -\frac{v_2}{\left( (1-\eps)(M-1-\mu) + \eps\right)^2}  \right).
    \end{align*}
    At $\varepsilon=0$, the term in brackets simplifies to 
    \begin{align*}
        \left(\frac{M-1 }{2(K-1)}\frac{M}{(M-1+\mu)^2} \right)^2 - \left(\frac{M-1 }{2(K-1)}\frac{M-2}{(M-1-\mu)^2} \right)^2.
    \end{align*}
    Note that 
    \begin{align*}
        G''(\mu) &= -\left(G'(\mu)-2G(\mu)G'(\mu) \right)\ut^{-1}\partial_{\mu}\Delta- G(\mu)\left(1-G(\mu) \right)\ut^{-1}\partial^2_{\mu}\Delta\\
        &= G'(\mu)\left(\left( 2G(\mu)-1\right)\ut^{-1}\partial_{\mu}\Delta + \frac{\partial^2_{\mu}\Delta}{\partial_{\mu}\Delta}\right)
        \end{align*}
        Thus, for $\ut$ small enough, the first term in the brackets must dominate, and $G''(\mu)>0$ for all $\mu<\frac{1}{2}$. It follows that, for all $\eps,\ut$ small enough, there is a unique fixed point $\mu^*$ to \eqref{eq: piimpl}, satisfying $\mu^*<\frac{1}{2}$. Note that the relationship between $\eps,\ut$ is unimportant here, it is enough that both variables be small enough.

We know that the fixed point $\mu^*$ must be bounded away from $\frac{1}{2}$ as $\Delta(\mu^*)\ge c>0$. Thus, as $\mu^*\approx \left(1+\exp(\ut^{-1}\Delta(\mu^*))\right)^{-1}$, we must have that $\frac{\mu^*}{\ut}\rightarrow0$ as $\ut\rightarrow0$. Any pool configuration $P,P'$ with $|P|=|P'|+2$ will be broken in favor of a configuration $|P|=|P'|$ as $\ut$ shrinks, completing the proof. 
\end{proof}

\subsection{Proof of Theorem \ref{thm:infobound}}
Suppose $K$ odd, so that there is a middle state $\bar x = (K+1)/2$. Theorem \ref{thm: attrchar} characterises MSFR and SAPS as constraints on policies that may be approached in the long run by the algorithmic sender. These constraints imply that any policy satisfying them must achieve point (1) for some level $\underline{U}_K\ge0$. We will now show points 2-3.  Without loss, consider policies symmetric around the middle state. We count pools of these policies from middle outwards, so that $N_0 = 1= |P_0|$, where $P_0=\{\bar x\}$ is the middle `pool'. Every policy will have some $n$ pools to the right, and to the left of $P_0$. By MSFR and SAPS, these policies must satisfy $N_i \le N_{i+1} +1$, and $N_i \le N_{i-1} +1$. For any $n$, let $\{\bar N_i\}_{i=1}^n$ be the strictly increasing sequence of pools $\bar N_i = \bar N_{i-1}+1$. Whether this policy is feasible depends on $n,K$. We have:
\begin{align*}
    \sum_{i=1}^n\bar N_i = \sum_{i=1}^ni+1 = n + \frac{n(n+1)}{2} = \frac{n(n+3)}{2}.
\end{align*}
This is feasible if $\frac{K-1}{2} = \frac{n(n+3)}{2}$, where $\frac{K-1}{2}$ is the number of states available to the right, left of the middle state. We get that if
\begin{align*}
    \hat n_K = -\frac{3}{2}+\sqrt{\frac{5}{4} + K}
\end{align*}
is an integer, the strictly increasing pool sequence is feasible. We then pin down the largest feasible pool size $\bar N_n = \hat n_K+1 $. By construction and the constraints on pool size increase, we can conclude that $M(K) = \lfloor \hat n_K +1 \rfloor$ is the largest feasible pool size for any $K$. Also, $n_K = \lfloor \hat n_K \rfloor$ is the number of strictly increasing pools $\bar N_i$ needed to achieve $M(K)= \bar N_{n_K}$.

We can write the Sender's payoffs as $U(\mu) = -\mathbb{E}\left(\left(x-\mathbb{E}(x\mid m,\mu)\right)^2\mid \mu\right)$, the residual variance over the state given policy $\mu$. To simplify this based on the above construction of policies, note that given any connected pool $P$ of length $N$, the conditional variance is that of a discrete uniform over the pool, i.e. $V(x\mid P) = \frac{1}{12}\frac{N^2-1}{(K-1)^2}$. The residual variance for policies consisting of connected pools is then the expected value of these conditional variances, assuming there are $n$ pools to either side of the middle state: 
\begin{align*}
    -U(\mu) &= 2\sum_{i=1}^n\frac{N_i}{K}\frac{1}{12}\frac{N_i^2-1}{(K-1)^2} = \frac{1}{6K(K-1)^2}\sum_{i=1}^n\left(N_i^3-N_i\right)\\
    &= \frac{1}{6K(K-1)^2}\sum_{i=1}^nN_i^3 - \frac{1}{12K(K-1)},
\end{align*}
as the sum of $N_i$ equals $\frac{K-1}{2}$. 

 Firstly, note that only sequences $\{N_i\}$ that are weakly monotone increasing can be potential minimizers. For any partition featuring some decreasing section, i.e. there is $j$ such that $N_j > N_{j+1}$, one can replace it with an alternative, feasible partition $\tilde N_i$ in the following way: Let $j>1$ be the first index at which $N_j > N_{j+1}$. Then $\tilde N_i = N_i$ for all $i< j$. Let $j'>j$ be the first $j'$ such that $N_{j'} = N_j$, if it exists. If $j'$ does not exist, let $\tilde N_i = N_j$ for all $i\ge j$. If $j'$ exists, let $\tilde N_i = N_j$ for $j\le i \le j'$. Let $\tilde N_i$ match $N_i$ whenever $N_i$ continues to weakly increase, then repeat the previous if there is another decrease.  Continue until $i=n$. Such $\{\tilde N_i\}$ must necessarily have a larger average $\frac{1}{n}\sum_i \tilde N_i$ than the original $N_i$. Letting $\mu$ implement the original sequence, and $\tilde \mu$ the newly constructed alternative, we note that $-U\left(\mu\right)$ can only shrink when replacing $N_i$ with weakly larger $\tilde N_i$, and so we must have $-U\left(\mu\right)\le -U\left(\{\tilde \mu\}\right)$.

 Consider the case of $\hat n_K = n_K$. The strictly increasing partition $\{\bar N_i\}$ is feasible. Any other feasible partition must have $n > n_K$, and feature some $i$ s.t. $N_i = N_{i+1}>1$. Let $i\ge1$ be smallest such index. First, if $N_i=1$, merge $P_i$ and $P_{i+1}$, leaving every other pool untouched. This will satisfy all equilibrium constraints, and by convexity of $x^3$ must increase the residual variance. Next, suppose $N_i>1$. Move one state in $P_i$ to $P_{i+1}$, and for all $j < i$, move one state from $P_j$ to $P_{j+1}$ as long as $N_j>1$, thereby creating a new partition $\{\tilde N_i\}$ which matches at all other indices. This again is a feasible partition as per our equilibrium constraints, albeit not necessarily weakly increasing. By convexity of $x^3$, $-U\left(\mu\right)\le -U\left(\{\tilde \mu\}\right)$ must hold. Thus, any such $\{N_i\}$ is improvable, while $\bar N_i$ is not. The strictly increasing partition minimizes $U()$ when $\hat n_K=n_K$.
 
For general $\hat n_K$, finding a minimizer of $U()$ given SAPS and MSFR is a combinatorial problem. One can bound minimal $U()$ by focusing on $\bar N_i$ however. Let $\{\bar N_i\}^-$ be the feasible strictly increasing partition adapted to any $\hat n_K$, so that if $\hat n_K$ is not an integer, $\{\bar N_i\}^-$ is constructed by starting at full revelation for the first $1\le i$, up until $j>1$, at which point strict increases are made until $\bar N_{n_K} =M(K)$ is reached. Also, let $\{\bar N_i\}^+$ be the infeasible strictly increasing sequence with $\bar N_1^+=2,\ldots \bar N_{n_K+1}=M(K)+1$.  Letting $U^+$ be the payoff for a policy based on $\bar N_i^+$, we get
\begin{align*}
    -U^+ &= 2\sum_{i=1}^{n_K+1}\frac{\bar N^+_i}{K}\frac{1}{12}\frac{(\bar N^+_i)^2-1}{(K-1)^2}\\
    &= \frac{1}{6K(K-1)^2}\sum_{i=1}^{n_K+1}(\bar N^+_i)^3 - \frac{1}{12K(K-1)}\\
    &=\frac{1}{6K(K-1)^2}\left(\left(\frac{(n_K+2)(n_K+3)}{2}\right)^2-1\right)- \frac{1}{12K(K-1)}.
\end{align*}
Now recall that by construction, $(n_K+1)(n_K+2)=n_K^2+3n_K+2=K+1$. So, that $(n_K+2)(n_K+3) = K+2n_K+5$. 
\begin{align*}
    -U^+ &= \frac{1}{6K(K-1)^2}\left( \frac{K+2n_K +3}{2}\right)\left( \frac{(K+2n_K +7}{2}\right)- \frac{1}{12K(K-1)},
\end{align*}
where we have that the highest order term in $K$ multiplying $1/(K(K-1)^2))$ is proportional to $K^2$, since $n_K = O(\sqrt{K})$. . Hence we get that $-U^+ = O(1/K)$. 
Similarly, let $U^-$ be the payoff for a policy based on $\bar N_i^-$, and we get
\begin{align*}
    -U^- &= 2\sum_{i=1}^{n_K}\frac{\bar N^-_i}{K}\frac{1}{12}\frac{(\bar N^-_i)^2-1}{(K-1)^2}\\ 
    &=-U^+ - \frac{((n_K+1)^3-n_K-1)}{6K(K-1)^2}.
\end{align*}
By construction $U^+\le U^*\le U^-$ for all $K$, where $U^*$ is the true minimal utility given SAPS and MSFR. We have that $\frac{U^+}{U^-}\to 1$ as $K\to \infty$. We have that $U^+ \to0$ as $K\to \infty$ at a rate of $1/K$,  as required. This comes as no surprise when considering that the largest pool size in our worst-policy construction is $n_K+1$, hence only grows  with $\sqrt{K}$. \hfill $\square$

\subsection{Proof of Theorem \ref{thm: Posbiasnolearn}}
First, note that $b>\underline{b}_K$ implies that for all states $x<x_K=1$,  when $\eps>0$ is small enough, pooling with a state higher than $x$ is preferred to getting a full-revelation payoff. Hence, any $\mu \in \mathbf{NE}_{CS}$ will be a non-trivial partition, where each state must be pooled with some other states. Then, if the equilibrium policy involves mixing over multiple messages for any pool, it must be unstable by Lemma \ref{lem: nomixpools}. If instead the policy is pure, there will be a nonempty set $\mathbf{M}_0$ of unused messages. By Lemma \ref{lem: unusedM}, for $\ut$ small enough, the posterior attached to such messages will be close to $1/2$. For any bias $b>0$, there will be a state $x\approx 1/2 - b$ for which some unused message then will provide a profitable deviation, as the associated payoff would be closer to zero than any payoff this state would receive under the pool assigned by the equilibrium policy. Hence, a pure equilibrium policy cannot be learnt by the learner either. In the knife-edge case where such state would be indifferent between sending unused messages and their equilibrium pool message, we'd again reach a contradiction: the Q-learner would have to learn to mix equally over all such previously unused messages and the original pooling messages, leading to instability as before for the case pools under message mixtures. \hfill $\square$

\section{Generalization to learning receiver}\label{app: leanringR}
We'd like to generalize our model to one where the receiver may adjust their behavior over time, not being `correct' at every finite $t$. We show here that Proposition \ref{prop: stochapproxQ} extends to such a setting, and hence all other results in the paper do too. A straightforward generalization of our result is one where the receiver learns, but does so at a faster speed than the sender. Let the sender's $Q$-updating scheme be as before, fixing a stepsize-schedule $\alpha_t$. Now suppose that in addition, the receiver updates their posterior $y_t(m)$, for all $m\in \mathbf{M}_K$, the following way: 
\begin{align}\label{eq: R_learn}
    y_{t+1}(m) = (1-\beta_t)y_t(m) + \beta_t\left(y(m,Q_t)+R_{t+1} \right),
\end{align}
where $\beta_t\rightarrow 0$ under the usual Robins-Monro conditions as satisfied by $\alpha_t$, and in addition $\frac{\alpha_t}{\beta_t}\rightarrow0$. This additional assumption implies that asymptotically, the sender's updates to the Q-table are slower than the receiver's updates to their posterior. Letting $\mathcal{F}_t = \sigma\left(\left\{y_k,Q_k,R_k\right\}_{k=0}^t\right)$ be the sigma algebra generated by the algorithm's trajectory up until $t$, we assume $R_{t+1}$ a Martingale-difference error according to the receiver's filtration $\mathcal{F}_t$. We assume that $R_{t+1}$ has bounded second moments given $\mathcal{F}_t$. $y(m,Q_t)$ is the correct posterior given matrix $Q_t$. Thus, \eqref{eq: R_learn} is a reduced-form equation modeling any updating scheme that in the $t$-limit will be arbitrarily close to a correct update based on the posterior given $Q_t$. Formally, consider the coupled system implied by this new setting:
\begin{align*}
    Q_{t+1}(x,m) &= Q_t(x,m) + \alpha_t\left(u^H(x,m,y_t(m))-Q_t(x,m) + M_{t+1}\right)\\
    y_{t+1}(m) &= y_t(m) + \beta_t\left(y(m,Q_t)-y_t(m)+R_{t+1} \right),
\end{align*}
which we can re-write as 
\begin{align*}
    Q_{t+1}(x,m) &= Q_t(x,m) + \beta_t\frac{\alpha_t}{\beta_t}\left(u^H(x,m,y_t(m))-Q_t(x,m) + M_{t+1}\right)\\
    y_{t+1}(m) &= y_t(m) + \beta_t\left(y(m,Q_t)-y_t(m)+R_{t+1} \right),
\end{align*} and note that as $\frac{\alpha_t}{\beta_t}\to 0$, $Q_t$ is approximately constant according to the timescale of $y_t$, for $t$ large enough. Note that for any fixed matrix $\bar Q$, trajectories of \eqref{eq: R_learn} will approximate solutions to the ODE (by an argument analogous to the on in the proof of Proposition \ref{prop: stochapproxQ})
\begin{align*}
    \dot{\tilde y}(m) = y(m,\bar Q) - \tilde y(m). 
\end{align*}
For any fixed $\bar Q$, the above has a globally attracting fixed point at $y(m,\bar Q)$, the correct posterior given $\bar Q$. Hence, we can apply the analysis and results of \cite[Chapter 6]{borkar2008stochastic} to deduce that, just like in the proof of Proposition \ref{prop: stochapproxQ}, the limiting O.D.E approximated by the Sender's Q-learning algorithm is \eqref{eq: ODE}. The remaining points regarding attracting sets in the Proposition then readily extend to this case, by writing the sender's $Q$ iteration \eqref{eq: Qlearnprecise} with an additional error term equal $U^S(y(m,Q_t),x,b)-U^S(y_t(m),x,b)$, which vanishes according to the $\alpha_t$-timescale of the $Q$-iteration, again since the receiver's learning speed $\beta_t$ is an order of magnitude faster than the sender's.

\end{document}